\documentclass[11pt]{article}

\usepackage[english]{babel}

\usepackage{amsmath}
\usepackage{amssymb}
\usepackage{amsthm}
\usepackage{fancyhdr}
\usepackage{verbatim}
\usepackage{graphicx}

\newcommand{\MP}{\mathbb{P}}

\newcommand{\beq}{\begin{equation}}
\newcommand{\eeq}{\end{equation}}
\newcommand{\bi}{\begin{itemize}}
\newcommand{\bd}{\begin{description}}
\newcommand{\ei}{\end{itemize}}
\newcommand{\ed}{\end{description}}

\newcommand{\bc}{\begin{center}}
\newcommand{\ec}{\end{center}}

\newtheorem{Thm}{\bf Theorem}[section]
\newtheorem{Cor}[Thm]{\bf Corollary}
\newtheorem{Prop}[Thm]{\bf Proposition}
\newtheorem{Lem}[Thm]{\bf Lemma}

\newtheorem{Rem}[Thm]{\bf Remark}

\numberwithin{equation}{section}

\setlength{\topmargin}{-17mm} \addtolength{\oddsidemargin}{-8mm}
\addtolength{\evensidemargin}{-10mm} \addtolength{\textwidth}{14mm}
\addtolength{\textheight}{37mm}

\pagestyle{plain}

\title{{\bf An application to credit risk of a hybrid Monte Carlo-Optimal quantization method}}

\author{ \vspace{1 cm}\\
{\sc \Large Giorgia Callegaro} \thanks{Scuola Normale Superiore di Pisa, Piazza dei Cavalieri 7, I-56100 Pise, Italy. Universit\'{e} d'\'{E}vry Val d'Essonne, D\'{e}partement de Math\'{e}matiques, Bd. F. Mitterrand,  F-91025 \'{E}vry Cedex. Ecole Nationale Sup\'{e}rieure d'Informatique pour l'Industrie et l'Entreprise,
1 square de la R\'{e}sistance,
91025 Evry Cedex. Phone +33 (0)1 69 47 02 05, e-mail: {\tt g.callegaro@sns.it. } The author's research is supported by an AMaMeF exchange grant and the ``Chaire Risque de Cr\'{e}dit'' of the French Banking Federation. }
\vspace{0.2 cm}\\
{\it Scuola Normale Superiore di Pisa, }\vspace{0.1 cm}\\
{\it Universit\'{e} d'\'{E}vry Val d'Essonne \&} \vspace{0.15 cm} \\
{\it ENSIIE } \vspace{1,5 cm}\\
{\sc \Large Abass Sagna}   \thanks{Laboratoire de Probabilit\'es et Mod\`eles Al\'eatoires, UMR 7599, Universit\'e Pierre et Marie Curie (Paris VI), 175 Rue du Chevaleret, 75013 Paris. Phone +33 (0)1 44 27 72 22, e-mail: {\tt abass.sagna@gmail.com}.}
\vspace{0.2 cm}\\
{\it Universit\'e Pierre et Marie Curie, Paris VI}
}

\date{\vskip 40 pt  \today }

\begin{document}

\maketitle
\newpage

\begin{abstract}
\vspace{0.5 cm}
In this paper we use a hybrid Monte Carlo-Optimal quantization method to approximate the conditional survival probabilities of a firm, given a structural model for its credit default, under partial information.

We consider the case when the firm's value is a non-observable stochastic process ${(V_t)}_{t \ge 0}$ and investors in the market have access to a process ${(S_t)}_{t \ge 0}$, whose value at each time $t$ is related to $(V_s, s \le t)$.
We are interested in the computation of the conditional survival probabilities of the firm given the ``investor information''.


As an application, we analyze the shape of the credit spread curve for zero coupon bonds in two examples.
\end{abstract}
\vspace{0.5 cm}
{\bf Keywords}: credit risk, structural approach, survival probability, partial information, filtering, optimal quantization, Monte Carlo method.

\newpage
\section*{Introduction}
In this paper we compute the conditional survival probabilities of a firm, in a market that is not transparent to bond investors, by using both Monte Carlo and optimal quantization methods. This allows us to analyze the credit spread curve under partial information in some examples, in order to investigate the degree of transparency and riskiness of a firm, as viewed by bond-market participants.

To introduce the problem, recall that most of the bonds traded in the market are corporate bonds and treasury bonds, that are consequently subject to many kinds of risks, such as market risk (due for example to changes in the interest rate), counterparty risk and liquidity risk. One of the main challenges in credit risk modeling is, then, to quantify the risk associated to these financial instruments.

The methodology for modeling a credit event can be split into two main approaches: the structural approach, introduced by Merton in 1974 and the reduced form approach (or ``intensity based''), originally developed by Jarrow and Turnbull in 1992.

The structural approach consists in modeling the credit event as the first hitting time of a barrier by the firm value process.

In reduced form models the default intensity is directly modeled and it is given by a function of latent state variables or predictors of default.

The first approach, in which we are interested, is intuitive by the economic point of view, but it presents some drawbacks: the firm value process can not be easily observed in practice, since it is not a tradeable security, and a continuous firm's value process implies a predictable credit event, leading to unnatural and undesirable features, such as null spreads for surviving firms for short maturities.

Despite the apparent difference between the two models (see, e.g., Jarrow and Protter, 2004), some recent results, starting from the seminal paper Duffie and Lando (2001), have unified the two approaches by means of information reduction. We consider, then, a structural model under partial information, in which investors can not observe the firm value process, but they have access to another process whose value is related to the firm value process.
We show in two examples that yield spreads for surviving firms are strictly positive at zero maturity, since investors are uncertain about the nearness of the current firm value to the trigger level at which the firm would declare
default.
The shape of the term structure of credit spreads may be useful, then, in practice to estimate the degree of transparency and of riskiness of a firm, from the investors' point of view.

We show that the computation of the conditional survival probabilities under partial information leads to a nonlinear filtering problem involving the conditional survival probabilities under full information. These former quantities are ap\-pro\-xi\-ma\-ted by a Monte Carlo procedure, while the filter distribution is estimated by optimal quantization.

The paper is organized as follows. In the first section, we present the market model and we decompose our problem into two problems (\textbf{P1}) and (\textbf{P2}), that are, respectively, the computation of conditional survival probability in a full information setting and the approximation of the filter distribution. Section \ref{ApproxFilter} and Section \ref{PsurvieMC} are devoted to the solution of the previous two problems. We provide error estimates in Section \ref{Error} and, finally, in Section \ref{NumRes} we present two numerical examples concerning the application to credit risk.

\section{Market model and problem definition}\label{S.2}
Let us consider a probability space $(\Omega, \mathcal F,\mathbb P)$, representing all the randomness of our economic context.\\
For the moment we concentrate our attention on the ``real world'' probability measure $\mathbb P$ and on a single firm model, in which the company is subject to default risk and we use a structural approach to characterize the default time.\\
The process representing the value of the firm, given for example by its value of financial statement, is denoted by ${(V_t)}_{t \ge 0}$ and we suppose it can be modeled as the solution of the following stochastic differential equation
\begin{equation}
\label{dVt}
\left \{
\begin{array}{rcl}
 d V_t &=& b(t,V_t)  d t + \sigma(t,V_t)   d W_t,\\ \quad V_0 & = & v_0,
\end{array}
\right.
\end{equation}
where the functions $b:[0,+ \infty) \times \mathbb R \rightarrow \mathbb R$ and $\sigma:[0,+ \infty) \times \mathbb R \rightarrow \mathbb R$ are 
Lipschitz in $x$ uniformly in $t$ and $W$ is a standard one-dimensional Brownian motion. We suppose that $\sigma(t,x)>0$ for every $(t,x) \in [0, + \infty) \times \mathbb{R} $.  \\
In our setting the process $V$ is non observable (it is also known as \emph{state} or \emph{signal}), but investors have access to the values of another stochastic process $S$, providing noisy information about the value of the firm, that can be thought, for example, as the price of an asset issued by the firm.\\ This observation process follows a diffusion of the type
\begin{equation}
\label{dSt}
\left \{
\begin{array}{rcl}
 d S_t &=& S_t \left[ \psi(V_t)   dt  + \nu(t) dW_{t} +  \delta(t)   d \bar W_t \right],\\ \quad S_0 & = & s_0,
\end{array}
\right.
\end{equation}
where  $\psi$ is locally bounded and Lipschitz, $ \nu$ and  $\delta$ are deterministic functions and $\bar W$ is a one-dimensional Brownian motion independent of $W$.
Note that in this model the return on $S$ is a (nonlinear) function of $V$.

Finally, following a structural approach, we define the default of the company as
\beq
\tau := \inf \left\{t \ge 0 : V_t \le a  \right\},
\eeq
where as usual $\inf \emptyset = + \infty$ and for a given constant parameter $a \in \mathbb R, 0 < a < v_0$.

In numerical examples we will consider models where $V_t \in (0,+\infty)$ (eventually by stopping the process $V$ at the default time $\tau$ by considering the process ${(V_{t \land \tau})}_t$).

We will deal with two different filtrations, representing different levels of information available to agents in the market and we suppose that they satisfy the usual hypotheses: a filtered probability space $(\Omega,\mathcal F, {(\mathcal F_t)}_{t \ge 0}, \mathbb P)$ satisfies the usual hypotheses if $\mathcal{F}_0$ contains all $\mathbb{P}$-null sets and if the filtration is right-continuous.

The first and basic information set is the ``default-free'' filtration, the one ge\-ne\-ra\-ted by the observation process $S$, which we will denote, for each $t \ge 0$,
$$
\mathcal F_t^S := \sigma ( S_s, 0 \le s \le t )
$$
and the second one is
the full information filtration ${(\mathcal G_t)}_{t \ge 0}$, i.e., the information available for example to a small number of stock holders of the company, who have access to $S$ and $V$ at each time $t$. In our case, the full information filtration is the one generated by the stochastic pair process $(W, \bar W)$.
In conclusion we have
$$
\mathcal F_t^S \subsetneq \mathcal G_t, \quad \forall \; t \ge 0,
$$
and we observe that the following \emph{immersion} property holds (see for example Coculescu, Geman and Jeanblanc, 2008 for an analogous analysis):
\begin{Lem}
Any ${(\mathcal F_t^S)}_t$-local martingale is a ${(\mathcal G_t)}_t$-local martingale.
We will say, then, that filtration ${(\mathcal F_t^S)}_t$ is \emph{immersed}  in the full filtration ${(\mathcal G_t)}_t$.
\end{Lem}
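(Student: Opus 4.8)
The plan is to exhibit the $\mathcal G_t$-conditional structure of $S$ explicitly and show that the ``extra'' information in $\mathcal G_t$ beyond $\mathcal F_t^S$ is independent of the driving noise of any $\mathcal F_t^S$-martingale. The cleanest route is via the classical criterion: filtration $(\mathcal F_t^S)_t$ is immersed in $(\mathcal G_t)_t$ if and only if every $\mathcal F_t^S$-local martingale remains a local martingale for $(\mathcal G_t)_t$, which by a standard localization and monotone-class argument reduces to checking that $\mathbb E[\,X\,|\,\mathcal G_s\,] = \mathbb E[\,X\,|\,\mathcal F_s^S\,]$ for bounded $\mathcal F_t^S$-measurable $X$ and $s \le t$; equivalently, that $\mathcal F_\infty^S$ and $\mathcal G_s$ are conditionally independent given $\mathcal F_s^S$ (Hypothesis (H)).

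First I would identify the innovation structure of $S$. Writing $\rho(t) := \sqrt{\nu(t)^2 + \delta(t)^2}$ and setting
\[
B_t := \int_0^t \frac{\nu(u)}{\rho(u)}\,dW_u + \int_0^t \frac{\delta(u)}{\rho(u)}\,d\bar W_u,
\]
Lévy's characterization shows $B$ is a $(\mathcal G_t)$-Brownian motion, and $dS_t = S_t[\psi(V_t)\,dt + \rho(t)\,dB_t]$. The key observation is that $\mathcal F_t^S$ is generated by $(B_s, s\le t)$ together with the drift term; more precisely, because $\rho$ is deterministic and $\psi$ is locally bounded, the Girsanov exponential $L_t = \mathcal E\big(-\int_0^\cdot \psi(V_u)/\rho(u)\,dB_u\big)_t$ defines (after a standard localization, since $\psi(V)$ need not satisfy Novikov globally) an equivalent measure $\mathbb Q$ on each $[0,T]$ under which $B$ is adjusted so that $S$ has zero drift. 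I would then show that under $\mathbb Q$ the process $S$ is a pure local martingale whose own filtration coincides with $\mathcal F_t^S$, and that the Radon–Nikodym density restricted to $\mathcal G_\infty$ is $\mathcal F_\infty^S$-measurable — this is where the fact that $\psi(V_t)$ enters $S$ only through the observable combination matters. Hypothesis (H) is stable under such an equivalent change of measure when the density is $\mathcal F^S$-measurable (Brémaud–Yor), so it suffices to verify immersion under $\mathbb Q$, where it is immediate because $\mathcal F_t^S = \sigma(B_s, s\le t)$ is the natural filtration of a $\mathbb Q$-Brownian motion and $\mathcal G_t$ is generated by that Brownian motion plus an independent one.

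The main obstacle I anticipate is the lack of global integrability of the Girsanov density: $\psi$ is only locally bounded and $V$ can explore large values (unless stopped at $\tau$), so $L$ need not be a true martingale on $[0,\infty)$, forcing me to work on finite horizons $[0,T]$ with localizing stopping times $T_n \uparrow \infty$ and then pass to the limit — local martingales are a local notion, so this is compatible with the statement, but it must be done carefully. A secondary technical point is verifying that $\mathcal F_t^S$ is genuinely the natural filtration of $B$ (and not strictly smaller), which uses that $\rho(t)>0$ (guaranteed since $\sigma>0$ forces... actually one needs $\nu^2+\delta^2>0$; if $\delta\equiv 0$ and $\nu\equiv 0$ on a set this degenerates, so I would assume $\rho>0$, which is implicit in the model being non-degenerate) so that $B$ can be recovered from $S$ by inverting the stochastic logarithm. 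Once these measure-theoretic and path-regularity points are settled, the immersion follows from the elementary independent-increments argument under $\mathbb Q$, and transfers back to $\mathbb P$.
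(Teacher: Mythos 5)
Your argument breaks down at the single step that would make the Girsanov route work: the claim that the Radon--Nikodym density is $\mathcal F^S_\infty$-measurable. The exponential $\mathcal E\big(-\int_0^\cdot \psi(V_u)\rho(u)^{-1}dB_u\big)$ involves the path of $\psi(V)$, and $V$ is precisely the hidden signal: it is $\mathcal G$-adapted but not $\mathcal F^S$-adapted (otherwise there would be no filtering problem in the first place). Hypothesis (H) is \emph{not} stable under an equivalent change of measure unless the density is measurable with respect to the small filtration, so immersion under your $\mathbb Q$ does not transfer back to $\mathbb P$. The gap cannot be repaired for the general dynamics (\ref{dSt}): when $\psi$ genuinely depends on $V$ the statement itself fails. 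Take $b\equiv 0$, $\sigma\equiv 1$ (so $V=W$), $\nu\equiv 0$, $\delta\equiv 1$, $\psi(v)=v$. Then $Y_t:=\log(S_t/s_0)+t/2=\int_0^t W_u\,du+\bar W_t$ generates $\mathcal F^S$, and for $t<T$,
$$
\mathbb E\big[Y_T\,\big\vert\,\mathcal G_t\big]=Y_t+(T-t)\,W_t,
\qquad
\mathbb E\big[Y_T\,\big\vert\,\mathcal F^S_t\big]=Y_t+(T-t)\,\mathbb E\big[W_t\,\big\vert\,\mathcal F^S_t\big].
$$
These differ on a set of positive probability, since the Kalman--Bucy conditional variance $\mathbb E\big[(W_t-\mathbb E[W_t\vert\mathcal F^S_t])^2\big]=\tanh(t)$ is strictly positive; hence the (true, square-integrable) $\mathcal F^S$-martingale $\mathbb E[Y_T\vert\mathcal F^S_\cdot]$ is not a $\mathcal G$-martingale, and no localization can save this.

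What is salvageable is your first step, which needs no Girsanov at all in the cases where the paper actually uses the lemma. If $\psi$ is constant (as in both numerical examples, where the drift of $S$ is $r$), then with $\rho=\sqrt{\nu^2+\delta^2}>0$, $B=\int_0^\cdot\rho^{-1}(\nu\,dW+\delta\,d\bar W)$ and $B^\perp=\int_0^\cdot\rho^{-1}(\delta\,dW-\nu\,d\bar W)$ one has: $(B,B^\perp)$ is a deterministic invertible rotation of $(W,\bar W)$, so $\mathcal G=\mathcal F^{B}\vee\mathcal F^{B^\perp}$ with $B^\perp$ independent of $B$; the process $S$ is autonomous with deterministic drift, so $\mathcal F^S=\mathcal F^{B}$; and a filtration is always immersed in its join with an independent filtration. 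That is a complete and elementary proof under the extra hypothesis that $\psi(V_t)$ is constant (or, more generally, $\mathcal F^S$-adapted). For comparison, the paper itself gives no proof of the lemma, only a pointer to Coculescu, Geman and Jeanblanc (2008), whose observation process has exactly this autonomous structure; your proposal is the only attempt here at an actual argument, but as written it proves the lemma only in that restricted setting.
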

For a fixed $s \geq 0$ we observe the process $S$  from $0$ to $s$ and we suppose that the default $\tau$ occurs after $s$, so that,   $$ \tau = \inf\{t>s: \ V_t \leq a \}.$$
Suppose now that a finite time horizon $T$ is fixed. For a given $t$, such that $ s<t <T$, our aim is to compute the following key quantity
\beq
{1 \! \! 1}_{\{\tau>s\}} \mathbb P \left(\inf_{s \le u \le t} V_u > a \Big\vert \mathcal F_s^{S} \right) .
\eeq
We will see in the following Section \ref{NumRes} how this quantity plays a fundamental role (if computed under a pricing measure) in the computation of credit spreads for zero coupon bonds.

\subsection{Reduction to a non\-li\-ne\-ar filtering problem}
Using the law of iterated conditional expectations, the  Markov property of $V$ and the independence between $W$ and $\bar{W}$,  we find, for each $(s,t) \in \mathbb R^+ \times \mathbb R^+, s < t$,
\begin{eqnarray} \label{Pfiltrata}
\displaystyle \mathbb P \left(\inf_{s \le u \le t} V_u > a \Big\vert \mathcal F_s^{S} \right)
&=& \displaystyle  \mathbb E \left[ \mathbb P \left( \inf_{s \le u \le t} V_u >a \Big\vert \mathcal G_s \right) \Big\vert \mathcal F_s^{S} \right] \nonumber \\
& = & \displaystyle \mathbb E \left[ \mathbb P \left( \inf_{s \le u \le t} V_u >a \Big\vert V_s \right) \Big\vert \mathcal F_s^{S} \right] \nonumber \\
& = & \displaystyle \mathbb E \left[ F(s,t,V_s) \vert \mathcal F_s^{S} \right],   \qquad  \mathbb{P}{\rm -a.s.}
\end{eqnarray}
where, for every $x \in \mathbb{R}$,
\begin{equation} \label{DefinitionDeF}
 F(s,t,x) := \mathbb P \left( \inf_{s \le u \le t} V_u >a \Big\vert V_s =x \right).
 \end{equation}
Finally,
\bi
\item[(\textbf{P1})] \emph{if} we compute $F(s,t,x)$ for every $x \in \mathbb R$, which is now a conditional survival probability given the full filtration, and
\item[(\textbf{P2})] \emph{if} we obtain the \emph{filter distribution} at time $s$, i.e., the conditional distribution of $V_s$ given $\mathcal F_s^{S}$, $\Pi_{V_s \vert  \mathcal F_s^{S}}$,
\ei
then we are done, since it suffices to compute the integral
\begin{eqnarray*}
\mathbb E \left[ F(s,t,V_s) \vert \mathcal F_s^{S} \right] & = &
\int_{-\infty}^{\infty} F(s,t,x) \Pi_{V_s \vert \mathcal F_s^{S}} (\textrm d x) \\
& = &  \int_{a}^{\infty} F(s,t,x) \Pi_{V_s \vert \mathcal F_s^{S}} (\textrm d x).
\end{eqnarray*}
It remains to solve the two ``intermediate problems'' (\textbf{P1}) and (\textbf{P2}). Let us consider first problem (\textbf{P2}).

\section{Approximation of the filter by optimal quantization}\label{ApproxFilter}
Let us  recall first some facts about  optimal vector  quantization.
\subsection{A brief overview  on optimal quantization}
Consider an $\mathbb{R}^d$-valued random variable $X$ defined  on a probability space $(\Omega,\mathcal{A},\mathbb{P})$ with finite $r$-th moment and probability distribution $\mathbb P_X$. Quantizing $X$ on a given grid $\Gamma =\{ x^1,\cdots,x^N \}$ consists in projecting $X$ on the grid $\Gamma$ following the closest neighbor rule. The induced  mean  $L^r$-error
$$ \Vert X - {\rm Proj}_{\Gamma}(X) \Vert_r  =  \Vert \min_{1 \leq i \leq N} \vert X - x^i \vert \Vert_r$$
is called the {\em $L^r$-mean quantization error} and the projection of $X$ on $\Gamma$, ${\rm Proj}_{\Gamma}(X)$, is called {\em the quantization of $X$}.  As a function of the grid $\Gamma$ the $L^r$-mean quantization error is continuous and reaches a minimum over all the grids with size at most $N$.  A grid $\Gamma^{\star}$ minimizing the $L^r$-mean quantization error over all the grids with size at most $N$  is called an {\em $L^r$-optimal quantizer}.

Moreover, the $L^r$-mean quantization error goes to $0$ as the grid size $N \rightarrow +\infty$ and the convergence rate is ruled by Zador theorem:
$$  \min_{\Gamma,\ \vert \Gamma \vert = N} \Vert X - {\rm Proj}_{\Gamma}(X) \Vert_r = Q_r(\mathbb{P}_X) N^{-1/d} + o \big(N^{-1/d} \big) $$
where $Q_r(\mathbb P_X)$ is a nonnegative constant.  We shall say no more about the basic results on optimal vector quantization. For a complete background on this field we refer to Graf and Luschgy (2000).

The first application of optimal quantization method to numerical probability appears in Pag\`{e}s (1997). It consists in estimating $ \mathbb{E} f(X) $ (it may also be a conditional expectation) by
\begin{equation} \label{QuantProcedureEstim}
\mathbb{E} f \big({\rm Proj}_{\Gamma^{\star}}(X)  \big) = \sum_{i=1}^N f(x^{\star,i})  \ p_i
\end{equation}
where  $\Gamma^{\star} =\{x^{\star,1}, \cdots,x^{\star,N} \}$ is an $L^r$-optimal grid for $X$  and $p_i =\mathbb{P}\big( {\rm Proj}_{\Gamma^{\star}}(X) = x^{\star,i}\big).$  The induced quantization error estimate depends on the regularity of the function $f$.
\vskip 0.2cm
\noindent $\bullet$ If $f: \mathbb{R}^d \mapsto \mathbb{R}$ is Lipschitz continuous and $r \ge 2$, then
\begin{eqnarray*}
\vert \mathbb{E}f(X) -   \mathbb{E}  f \big({\rm Proj}_{\Gamma^{\star}}(X)  \big) \vert  & \leq & \mathbb{E} \vert  f(X) -  f \big({\rm Proj}_{\Gamma^{\star}}(X)  \big) \vert \\
&  \leq & [f]_{{\rm Lip}}  \Vert X -  {\rm Proj}_{\Gamma^{\star}}(X)   \Vert_1  \\
& \leq  &  [f]_{{\rm Lip}}  \Vert X -  {\rm Proj}_{\Gamma^{\star}}(X)   \Vert_2 .
\end{eqnarray*}

\noindent $\bullet$  If the derivative $Df$ of $f$ is Lipschitz and $r \ge 2$, then, for any optimal grid $\Gamma^{\star}$, we have
\begin{equation*}
\vert \mathbb{E}f(X) -  \mathbb{E} f \big({\rm Proj}_{\Gamma^{\star}}(X)  \big)   \vert \leq  [Df]_{{\rm Lip}}  \Vert X - {\rm Proj}_{\Gamma^{\star}}(X)  \Vert_2^2 .
\end{equation*}

How to compute numerically quadratic optimal quantizers  or $L^r$-optimal (or stationary) quantizers in general, the associated weights and $L^r$-mean quantization errors  is an important issue from the numerical point of view.  Several algorithms are used in practice. In the one dimensional framework, the $L^r$-optimal quantizers are unique up to the grid size as soon as the density of $X$ is strictly log-concave. In this case the Newton algorithm is a commonly used algorithm to carry out the $L^r$-optimal quantizers when closed or semi-closed formulas are available  for the gradient and the hessian matrix.

When the dimension $d$ is greater than $2$, on the contrary, the $L^r$-optimal grids are not uniquely determined and all  $L^r$-optimal quantizers search algorithms are based on zero search  recursive procedures like Lloyd's I algorithms (or generalized Lloyd's I algorithms which are the natural extension of the quadratic case), the Competitive Learning Vector Quantization (CLVQ) algorithm (see Gersho and Gray, 1992), stochastic algorithms (see Pag\`{e}s, 2008, and Pag\`{e}s and Printems, 2003), etc.  From now on we consider quadratic optimal quantizers.

\subsection{Estimation of the filter}\label{filtering}
We focus now on problem (\textbf{P2}) and we present here a solution based on optimal quantization, as suggested in Pag\`{e}s and Pham (2005).\\ For an overview on nonlinear filtering problems in interest rate and credit risk mo\-de\-ls we refer to Frey and Runggaldier (2009) and references therein and, focusing on filtering theory in credit risk, we also have to mention the seminal papers Kusuoka (1999) and Nakagawa (2001).\\

We consider a general \emph{discrete time} setting, in which we recall the relevant formulas and the desired approximation of the filter (see, e.g., Pag\`{e}s and Pham, 2005 and Pham, Runggaldier and Sellami, 2005, for a detailed background).
We introduce a probability space $(\Omega, \mathcal A,\mathbb P)$ (notice that $\mathbb P$ is not the same measure we considered in Section \ref{S.2}, but for simplicity we will use the same notation) and we suppose that:
\begin{itemize}
\item \emph{the signal process} ${(X_k)}_{k \in \mathbb N}$ is a  finite-state Markov chain taking values in the space $E$, with known probability transition, from time $k-1$ to time $k$, $P_k(x_{k-1},\textrm dx_k), k \ge 1$, and given initial law $\mu$, and
\item \emph{the observation process} is an $\mathbb R^q$-valued process ${(Y_k)}_{k \in \mathbb N}$ such that $Y_0=y_{0}$ and the pair ${(X_k,Y_k)}_{k \in \mathbb N}$ is a Markov chain.
\end{itemize}
Furthermore, we suppose that for all $k \ge 1$
\begin{itemize}
\item[(\textbf{H})] the law of $Y_k$ conditional on $(X_{k-1}, Y_{k-1}, X_k)$ admits a density
    $$
    y_k \mapsto g_k(X_{k-1}, Y_{k-1}, X_k, y_k),
    $$
\end{itemize}
so that the probability transition of the Markov chain ${(X_k,Y_k)}_{k \in \mathbb N}$ is given by\\ $P_k(x_{k-1}, \textrm d x_k) g_k(x_{k-1},y_{k-1},x_k, y_k) \textrm d y_k$, with initial law $\mu(\textrm d x_0) \delta_0 (\textrm d y_0)$.\\

In this discrete time setting we are interested in computing conditional expectations of the form
$$
\Pi_{Y,n}f := \mathbb E \left[ f(X_n) \vert Y_1, \dots, Y_n \right],
$$
for suitable functions $f$ defined on $E$, i.e., we are interested in computing at some time $n$ the law $\Pi_{Y,n}$ of $X_n$ given the past observation $Y=(Y_1,\dots,Y_n)$. Having fixed the observation $Y=(Y_1, \dots,Y_n) = (y_1,\dots,y_n)=:y$ we will write $\Pi_{y,n}$ instead of $\Pi_{Y,n}$.

It is evident that in the case when the state space of the signal consists of a finite number of points, the filter is characterized by a finite-dimensional vector: if for e\-xam\-ple each $X_k$ takes values in a set $\{ x_k^1, \dots,x_k^{N_k} \}$ (as in the  case where we quantize a process X at discrete times $t_{k}, k=0, \cdots,n$ with  grids of size $N_{k}$), then the discrete time filter distribution will be fully determined by the $N_k$-vector with components
$$\Pi_{Y,k}^i = \mathbb P \left( X_k = x_k^i \vert Y_1,\dots,Y_k \right), \quad i=1, \dots, N_k.
$$
It is for this reason that, following Pag\`{e}s and Pham (2005), we apply optimal quantization results in order to obtain a spatial discretization, on a grid $\Gamma_k=\{ x_k^1, \dots,x_k^{N_k} \}$, of the state $X_k,k=0, \dots,n$, and to characterize the filter distribution by means of the finite number of points $\{x_0,x_1^1,\dots,x_1^{N_1},x_2^1,\dots,x_2^{N_2},\dots,x_n^1,\dots, x_n^{N_n}\}$ making up the grids ${(\Gamma_k)}_k$.

In what follows we recall the basic recursive filtering equation, that we will use in our numerics to approximate the filter.
By applying the Markov property of $X$ and $(X,Y)$ and Bayes' formula, we find:
\begin{equation}  \label{EqFilterPi}
\Pi_{y,n} f = \frac{\pi_{y,n}f}{\pi_{y,n} 1\!\!1},
\end{equation}
where $\pi_{y,n}$ is the \emph{un-normalized filter}, defined by
\begin{equation}\label{filtroNN}
\pi_{y,n} f = \int \cdots \int f(x_n) \mu(\textrm d x_0) \prod_{k=1}^n g_k(x_{k-1},y_{k-1},x_k,y_k)P_k(x_{k-1},\textrm d x_k).
\end{equation}
Equivalently, we recall the following recursive formula, that can be directly obtained as well by applying Bayes' formula and the Markov property:
$$
\Pi_{y,k}(\textrm d x_k) 
\propto \int g_k(x_{k-1}, y_{k-1},x_k,y_k)P_k(x_{k-1},\textrm d x_k) \Pi_{y,k-1}(\textrm d x_{k-1}),
$$
where now $y$ in $\Pi_{y,k-1}$ represents the realization of the vector $(Y_1,\dots,Y_{k-1})$ and we do not have equality because we need to re-normalize.\\
Now for any $k \in \{1,\cdots,n \}$  note that
$$ \pi_{y,k} f= \mathbb{E} \Big( f(X_k) \prod_{i=1}^{k}  g_i(X_{i-1},y_{i-1},X_{i},y_i) \Big).$$
Therefore, introducing the natural filtration of $X$, $ {(\mathcal F_k^{{X}})}_{k \in \mathbb N}$, we have
\begin{eqnarray}
\pi_{y,k} f & =  &  \mathbb{E} \left(  \mathbb{E} \bigg( f(X_k) \prod_{i=1}^{k}  g_i(X_{i-1},y_{i-1},X_{i},y_i)  \vert \mathcal F ^X_{k-1} \bigg)  \right) \nonumber \\
&  = &   \mathbb{E} \left(  \mathbb{E} \left( f(X_k)  g_k(X_{k-1},y_{k-1},X_{k},y_k) \vert \mathcal F^X_{k-1} \right)   \prod_{i=1}^{k-1}  g_i(X_{i-1},y_{i-1},X_{i},y_i)   \right) \nonumber  \\
& = & \mathbb{E} \left(  H_{y,k} (f( X_{k-1}))    \prod_{i=1}^{k-1}  g_i(X_{i-1},y_{i-1},X_{i},y_i)  \right), \label{RecursiveForm}
\end{eqnarray}
where $H_{y,k}, k=1, \dots, n,$ is a  family of bounded transition kernels  defined on bounded measurable functions $f: E \rightarrow \mathbb R$ by:
\begin{eqnarray}
H_{y,k} f(x_{k-1}) &:=& \mathbb E \left[ f(X_k)g_k(x_{k-1},y_{k-1},X_k,y_k) \vert X_{k-1} = x_{k-1} \right]  \nonumber \\
& = & \int f(x_k) g_k(x_{k-1},y_{k-1},x_k,y_k)P_k(x_{k-1},\textrm d x_k), \label{KernelMatrix}
\end{eqnarray}
with $x_{k-1} \in E$. For every $x \in E$, furthermore, we have
$$
H_{y,0} f(x) := \pi_{y,0}f = \mathbb E \left[ f(X_0)\right]= \int f(x_0) \mu(\textrm d x_0).
$$
It follows, then, from (\ref{RecursiveForm}) that
\begin{equation}\label{filtroNNricorsiva}
\pi_{y,k}f = \pi_{y,k-1} H_{y,k}f, \quad k=1, \dots,n,
\end{equation}
so that we finally obtain the recursive expression
$$
\pi_{y,n} = H_{y,0} \circ H_{y,1} \circ \dots \circ H_{y,n}.
$$

In order to obtain the discrete time approximation of the desired filter $\Pi_{V_s \vert  \mathcal F_s^{S}}$, that is needed to solve (\textbf{P2}), we now fix a time discretization grid $t_0 =0 <\cdots<t_n=s$ in the interval $[0,s]$ and we apply the previous results by working with the corresponding quantized process $\widehat{V}$ (we identify $X$ with $V$ and  $Y$ with $S$).

Before focusing on the discrete time filter approximation, let us  make the following remark concerning the conditional law  of $S_t$ given $((V_{u})_{u \in [s,t]},S_s)$. This will be useful to be sure that in our case hypothesis (\textbf{H}) is verified.
\begin{Rem}  \label{PropLN}
{\rm Let $ s \leq t$. Using the form of the solution of the SDE (\ref{dSt})
$$ S_{t }  = S_s \exp \left(   \int_{s}^t   \Big(  \psi (V_{u})  - \frac{1}{2}({\nu^2(u)}+{\delta^2(u)}) \Big) du + \int_s^t \nu(u) d W_u  + \int_s^t \delta(u) d \bar{W}_u  \right),$$
we notice that
\beq  \label{EqCondLawSV}
\mathcal{L} \big(S_t   \vert (V_{u})_{ s \leq u \leq t},S_s \big) = {\rm LN} (m_{s,t}; \sigma_{s,t}^{2}),
\eeq
where
$$
m_{s,t} = \log(S_s) +  \int_{s}^t  \Big( \psi (V_u) - \frac{1}{2}({\nu^2(u)}+{\delta^2(u)}) - \nu(u) \frac{b(u,V_{u})}{\sigma(u,V_{u})} \Big)du   +  \int_s^t  \frac{\nu(u)}{\sigma(u,V_{u})} dV_{u}
$$
and
$$
\sigma_{s,t}^2  = \int_{s}^t {\delta^2(u)} du.
$$
${\rm LN}(m;\sigma^2)$ stands for the lognormal distribution with mean $m$ and variance $\sigma^2$.}
\end{Rem}
Now, suppose that we temporarily  have a time discretization grid  from $0$ to $t$: $u_{0}=0<u_{1}<\cdots<u_{m}=t$. For $m$ large enough we can estimate the mean  and the variance appearing in Equation (\ref{EqCondLawSV}) by using an Euler Scheme. When the estimations of  the mean  $m_{s,t}$ and  variance $\sigma^{2}_{s,t}$ between two discretization steps are respectively denoted by $m_{k}$ and $\sigma^{2}_{k}$ we have:
\beq  \label{EqCondLDisc}
\mathcal{L} \big(S_k   \vert  V_{k-1},S_{k-1},V_{k} \big) = {\rm LN} (m_{k}; \sigma_{k}^{2})
\eeq
with
\begin{eqnarray*}
m_{k} &=& \log S_{k-1} + \left( \psi(V_{k-1}) - \frac{1}{2} \left({\nu^2(u_{k-1})} + {\delta^2(u_{k-1})}\right)  - \nu(u_{k-1}) \frac{b(u_{k-1},V_{k-1})}{\sigma(u_{k-1},V_{k-1})} \right) \Delta_{k} \\
& & + \frac{\nu(u_{k-1})}{\sigma(u_{k-1},V_{k-1})} \Delta V_{k};
\end{eqnarray*}
$$
\sigma_{k}^{2} = {\delta^2(u_{k-1})} \Delta_{k},
$$
where   $ S_{k}:= S_{u_{k}};  \ V_{k} := V_{u_{k}};  \  \Delta V_{k} = V_{k} - V_{k-1}; \ \Delta_{k} = u_{k}-u_{k-1.}$
Recall that the density of the lognormal distribution with mean $m$ and variance $\sigma^2$, ${\rm LN}(m;\sigma^2)$, reads
$$  f(x;m,\sigma^2) = \frac{1}{\sigma x \sqrt{2 \pi }} e^{-\frac{1}{2 \sigma^2} (\log x - m)^2}.$$
We finally notice, then, that the density of $S_{k}$ given $(V_{k-1},S_{k-1},V_k)$ only depends on $(V_{k-1},S_{k-1},V_k)$.

\begin{Rem} {\rm
In the specific case where
\begin{equation*}
\left \{ \begin{array}{ll}
dV_t = \mu V_t  dt + \sigma V_t dW_t, & V_0 = v_0,    \\
dS_t = \mu S_t  dt + \sigma S_t dW_t + \delta S_t d\bar{W}_t,   & S_0 = s_0,  \\
\end{array}  \right.
\end{equation*}
we directly deduce from Remark \ref{PropLN} that for every $s \leq t$
\begin{equation*}
\mathcal{L} \big(S_t   \vert (V_{u})_{ s \leq u \leq t},S_s \big) = {\rm LN} \left( \log \Big(\frac{S_sV_t}{V_s} \Big) - \frac{1}{2} \delta^2(t-s); \delta^2(t-s) \right).
\end{equation*}
}
\end{Rem}

\vskip 0.5cm
\noindent $\rhd$  \textbf{Estimation of the filter.} The method is already studied in Pag\`{e}s and Printems (2003) and consists first in quantizing for every time step $t_k$ the random variable $V_k$ by considering
\begin{equation}
 \widehat{V}_k  = {\rm Proj}_{\Gamma_k}(V_k), \quad k=0,\cdots,n,
\end{equation}
where for every $k$, $\Gamma_k$ is a grid of $N_k$ points $v_k^i, \ i=1,\cdots,N_k$ to be optimally chosen and where ${\rm Proj}_{\Gamma_k}$ denotes the closest neighbor projection on the grid $\Gamma_k$.\\
Owing to Equation $(\ref{filtroNNricorsiva})$ our aim is to estimate the filter using an approximation of the probability transition $P_k(v_{k-1},dv_{k})$ of $V_{k}$ given $V_{k-1}$. These probability transitions are approximated by the probability transition matrix  $\hat{p}_k := (\hat{p}_k^{ij})$ of $\widehat{V}_{k}$ given   $\widehat{V}_{k-1}$:
\begin{equation} \label{transition_weights}
\hat{p}_k^{ij} = \mathbb{P} (\widehat{V}_k = v_k^j  \vert \widehat{V}_{k-1} = v_{k-1}^i), \ i=1,\cdots,N_{k-1}, \ j=1,\cdots,N_k.
\end{equation}
Then, following Equation $(\ref{KernelMatrix})$, by fixing the observation  $y:=(y_0, \cdots,y_{k})$ (for notational simplicity we indicate, as previously in the general case, the observations of $S$ by $y$), we estimate the transition kernel matrix $H_{y,k}$ by the quantized transition kernel $\widehat{H}_{y,k}$ given by
$$
\widehat{H}_{y,k}  =  \sum_{j=1}^{N_k} \widehat{H}_{y,k}^{ij} \delta_{v_{k-1}^i}, \quad i=1,\cdots,N_{k-1}, \quad k=1,\cdots,n,
$$
where
$$
\widehat{H}_{y,k}^{ij}  =   g_k(v_{k-1}^i,y_{k-1},v_k^j,y_k)  \hat{p}_k^{ij}, \quad i=1,\cdots,N_{k-1}, \ j=1,\cdots,N_k
$$
and where the $v_k^j$'s, $\ j=1,\cdots,N_k$ are the (quadratic) optimal quantizers of  $V_k$. We also estimate $H_{y,0}$ by
$$ \widehat{H}_{y,0} =  \sum_{i=1}^{N_0} \mathbb{P}(\widehat{V}_0 = v_0^i) \ \delta_{v_0^i}.$$
This leads to the following forward induction to estimate  $\pi_{y,n}$:
\begin{equation}
\widehat{\pi}_{y,0} = \widehat{H}_{y,0}, \qquad   \widehat{\pi}_{y,k} = \widehat{\pi}_{y,k-1} \widehat{H}_{y,k}, \quad \qquad  k=1,\cdots,n,
\end{equation}
or, equivalently,
\begin{equation*}
\left \{ \begin{array}{ll}
\widehat{\pi}_{y,0} = \widehat{H}_{y,0}  \\
\widehat{\pi}_{y,k} = \Big( \sum_{i=1}^{N_{k-1}}  \widehat{H}_{y,k}^{ij} \widehat{\pi}_{y,k-1}^i  \Big)_{j=1,\cdots,N_k}, \quad k=1,\cdots,n.
\end{array}  \right.
\end{equation*}
\noindent Finally, we estimate the desired filter 
at time $t_n=s$ by
\begin{equation} \label{OptimalFilter}
\widehat{\Pi}_{y,n} f = \frac{\widehat{\pi}_{y,n}f}{\widehat{\pi}_{y,n} 1\!\!1 }.
\end{equation}
\vskip 0.5cm
\noindent $\rhd$  \textbf{Estimation of the conditional survival probability.}
Owing to Equation $(\ref{Pfiltrata})$  we use  optimal quantization to estimate  the  $\mathbb{P} \big(\inf_{s \leq u \leq t} V_{u} >a \vert \mathcal{F}_s^{S} \big)$  on the set $\{ \tau>s \}$ by
\begin{equation}  \label{CondP}
 \sum_{i=1}^{N_n} F(s,t,v_n^i) \ \widehat{\Pi}_{y,n}^{i}
\end{equation}
where $v_n^i,i=1, \cdots, N_n$ is the quadratic optimal grid of the process $V$ at time $t_n=s$,  $\widehat{\Pi}_{y,n}^{i}$ is the $i$-th coordinate of the optimal filter $\widehat{\Pi}_{y,n}$ given in $(\ref{OptimalFilter})$ and, for every $i$, $F(s,t,v_n^i)$ is defined  as in $(\ref{DefinitionDeF})$.  Note that this last function has in general  no explicit expression. In such case, we will estimate it  by Monte Carlo  as specified in the next section.

\section{Approximation by Monte Carlo of survival probabilities under full information}\label{PsurvieMC}
The aim of this section is to solve problem (\textbf{P1}), i.e., to compute, for each pair of positive values $(s,t),s \le t \le T$,
\beq
\label{PrP1}
\mathbb P \left( \inf_{s \le u \le t} V_u >a \Big\vert V_s \right) = \mathbb E \left( {1 \! \! 1}_{\{\inf_{s \le u \le t} V_u >a\}} \vert V_s \right),
\eeq
where in our general setting the firm value $V$ follows \emph{a priori} a diffusion of the type (\ref{dVt}). Notice that in the specific case  where  $V$ is a geometric Brownian motion there exists a closed-formula, that we recall below.\\
If $$dV_t = \mu V_t dt + \sigma V_t dW_t, \quad V_0 =v_0,  $$
then
\beq  \label{PsurvieW}
\mathbb{P} \left(\inf_{s \leq u \leq t} V_u >a \vert V_s \right)  =  \Phi (h_1(V_s,t-s)) - \left( \frac{a}{V_s}\right)^{\sigma^{-2}(r-\sigma^2/2)} \Phi (h_2(V_s,t-s))
\eeq
where
\begin{eqnarray*}
h_1(x,u) &=& \frac{1}{\sigma \sqrt{u}} \left( \log \left(\frac{x}{a} \right) + \left(\mu - \frac{1}{2} \sigma^2\right) u \right), \\ h_2(x,u) &=& \frac{1}{\sigma \sqrt{u}} \left( \log \left(\frac{a}{x} \right) + \left(\mu - \frac{1}{2} \sigma^2\right) u \right)
\end{eqnarray*}
and where $\displaystyle  \Phi(x) = \frac{1}{\sqrt {2 \pi}} \int_{- \infty}^x e^{-u^2/2} \textrm d u$ is the cumulative distribution function of the standard Gaussian law (see for example Borodin and Salminen, 2002 or Revuz and Yor, 1999).\\

Since in general we cannot use directly the result in Equation (\ref{PsurvieW}), we have to resort to an approximation method. We  will adopt the ``regular Brownian bridge  method''.
However, note that many other methods can be used to estimate these probabilities, such as in Kahal\'{e} (2007), where the crossing probabilities are calculated via Schwartz distributions in the specific case of drifted Brownian motion and in Linetsky (2004), where the survival probabilities and hitting densities relative to the CIR, the CEV and to the OU diffusions are expressed as infinite series of exponential densities:
\begin{equation}\label{SerieLin}
\mathbb P_{v_0}(\tau > t) = \sum_{n=1}^{\infty} c_n e^{- \lambda_nt} , \quad t>0,
\end{equation}
where $0 < \lambda_1 < \lambda_2 < \dots < \lambda_n \rightarrow \infty$ as $n \rightarrow \infty$ and $(c_n)_n$ are explicitly given in terms of the solution of the Sturm-Liouville equation and the eigenvalues of the Sturm-Liouville problem.\\

In order to find an approximated solution to problem (\textbf{P1}) by means of the re\-gu\-lar Brownian bridge method, we focus on the interval $[s,t]$ and we discretize it by means of $u_0=s < u_1 < \dots < t=u_{N}$. We denote by $\bar V$ the continuous Euler scheme relative to $V$.

The regular Brownian bridge method is connected to the knowledge of the distribution of the minimum (or the maximum) of the continuous Euler scheme $\bar{V}$ relative to the process $V$ over the time interval $[s,t]$, given its values at the discrete time observation points $s=u_0 < u_1 < \dots < u_N=t$ (see, e.g., Pag\`{e}s, 2008).

\begin{Lem}
\begin{equation}
\mathcal{L} \Big(\min_{u \in [s,t]} \bar{V_u} \vert \bar{V}_{u_k} = v_k, k=0,\cdots,N \Big)   = \mathcal{L} \Big(\min_{k=0, \cdots,N-1} G_{v_{k},v_{k+1}}^{-1} (U_k) \Big)
\end{equation}
where $(U_k)_{k=0, \cdots,N-1}$ are $i.i.d$ random variables uniformly distributed over the unit interval and $G_{x,y}^{-1}$ is the inverse function of the conditional distribution function $G_{x,y}$, defined by
$$
G_{x,y}(u)  = \exp \Big( - \frac{2N}{(t-s) \sigma^2(x)}(u-x) (u-y) \Big) {1 \! \! 1}_{\{\min(x,y) \geq u \}}.
$$
Notice that we have omitted the dependence on time in  $\sigma$.
\end{Lem}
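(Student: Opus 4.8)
The plan is to exploit the fact that the continuous Euler scheme has, between two consecutive discretization instants, frozen (hence constant) coefficients, so that conditioning on the whole skeleton $(\bar V_{u_k})_{0\le k\le N}$ turns each of the $N$ pieces into a Brownian bridge whose minimum obeys an explicit law; the minima of the pieces are then conditionally independent, and the inverse-transform representation finishes the job. Throughout I assume, as the formula for $G_{x,y}$ dictates, a uniform mesh of step $\Delta:=(t-s)/N$.

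First I would write, for $u\in[u_k,u_{k+1}]$,
\[
\bar V_u=\bar V_{u_k}+b(u_k,\bar V_{u_k})(u-u_k)+\sigma(u_k,\bar V_{u_k})\,(W_u-W_{u_k}),
\]
so that, conditionally on $\bar V_{u_k}=v_k$, the restriction of $\bar V$ to $[u_k,u_{k+1}]$ is a Brownian motion started at $v_k$ with constant drift $b(u_k,v_k)$ and constant volatility $\sigma(u_k,v_k)=:\sigma(v_k)$ (the time argument being omitted, as warned in the statement). Since $(\bar V_{u_k})_k$ is a Markov chain and the Brownian increments over disjoint intervals are independent, conditionally on $\{\bar V_{u_k}=v_k,\ k=0,\dots,N\}$ the $N$ pieces $(\bar V_u)_{u\in[u_k,u_{k+1}]}$, $k=0,\dots,N-1$, are \emph{independent}, and each is a Brownian bridge from $v_k$ to $v_{k+1}$ over an interval of length $\Delta$ with volatility $\sigma(v_k)$. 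A key point here is that the drift of the Euler scheme is irrelevant once one conditions on the right endpoint: a Girsanov (or direct) argument shows the bridge law does not depend on $b$, so only $\sigma$, frozen at the left endpoint $v_k$, enters the minimum's law.

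Next I would recall the law of the minimum of such a bridge: by the reflection principle, for a Brownian bridge $X$ on an interval of length $\Delta$ from $x$ to $y$ with volatility $\sigma(x)$ one has, for $u\le\min(x,y)$,
\[
\mathbb P\Big(\min_{[u_k,u_{k+1}]}\bar V_u\le u\ \Big|\ \bar V_{u_k}=x,\ \bar V_{u_{k+1}}=y\Big)=\exp\Big(-\frac{2(x-u)(y-u)}{\sigma^2(x)\,\Delta}\Big),
\]
and the conditional probability equals $1$ for $u>\min(x,y)$. Because $\Delta=(t-s)/N$, the right-hand side is exactly $G_{x,y}(u)$; that is, $G_{x,y}$ is the conditional c.d.f.\ of the minimum of $\bar V$ over one mesh interval given the two endpoint values.

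Finally I would assemble the pieces. Writing $M_k:=\min_{u\in[u_k,u_{k+1}]}\bar V_u$ we have $\min_{u\in[s,t]}\bar V_u=\min_{0\le k\le N-1}M_k$, and by the conditional independence established above the $M_k$ are, given $\{\bar V_{u_k}=v_k\}$, independent with $M_k\sim G_{v_k,v_{k+1}}$. Using the inverse-transform representation $M_k\stackrel{d}{=}G_{v_k,v_{k+1}}^{-1}(U_k)$ with $(U_k)_{0\le k\le N-1}$ i.i.d.\ uniform on $[0,1]$ — the independence of the $U_k$ matching that of the $M_k$ — gives the asserted identity in law. The main obstacle, and the substance of the argument, is the second paragraph: one must genuinely justify both that the conditioned pieces are mutually independent Brownian bridges and that the Euler drift drops out of their law, since everything else (the bridge-minimum formula, the inverse-transform step) is classical.
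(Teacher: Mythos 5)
Your argument is correct. Note, however, that the paper does not actually prove this lemma: it is stated as a known fact with a pointer to Pag\`{e}s (2008), so there is no in-paper proof to compare against. What you give is precisely the standard justification of the ``regular Brownian bridge method'': freezing of the Euler coefficients on each mesh interval, conditional independence of the pieces given the skeleton, invariance of the bridge law under the (frozen) drift, the reflection-principle formula for the minimum of a Brownian bridge, and inverse-transform sampling. One small point worth making explicit: as written, $G_{x,y}(u)$ vanishes for $u>\min(x,y)$ because of the indicator, so it is not literally a distribution function; it agrees with the conditional c.d.f.\ of the piecewise minimum only on $(-\infty,\min(x,y)]$, where it increases from $0$ to $1$, and that is the only range relevant to defining $G_{x,y}^{-1}(U_k)$ for $U_k$ uniform on $(0,1)$ --- you handle this correctly. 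Your identification of the two genuinely nontrivial steps (mutual conditional independence of the bridge pieces and the disappearance of the drift from their law) is also where the weight of a fully rigorous write-up should go, together with the remark, implicit in your last paragraph, that given the whole skeleton the law of the $k$-th piece minimum depends only on the two endpoints $(v_k,v_{k+1})$.
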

Recall that for a given time discretization grid  $u_k := s+ \frac{k(t-s)}{N}, k = 0,\cdots,N$, on the set $[s,t]$,  the continuous Euler scheme relative to the process $V$ is defined by
$$
\bar{V}_u  = \bar{V}_{\underline{u}}  + b(\underline{u},\bar{V}_{\underline{u}}) (u - \underline{u}) + \sigma(\underline{u},\bar{V}_{\underline{u}}) (W_{u} - W_{\underline{u}}), \quad  \bar{V}_s = v_s,
$$
with $\underline{u} = u_k $ if $u \in [u_k,u_{k+1}).$
We deduce from the previous lemma the following result.

\begin{Prop}   \label{PsurvieFull2}
$$ \mathbb P \left( \min_{s \le u \le t} \bar V_u > a \big \vert \bar V_s \right)  = \mathbb E \left(\prod_{k=0}^{N-1} G_{\bar V_{u_k},\bar V_{u_{k+1}}}(a) \big\vert  \bar V_s \right). $$
\end{Prop}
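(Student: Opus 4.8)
The plan is to condition on the skeleton of the Euler scheme and use the previous lemma, which describes the conditional law of the running minimum of $\bar V$ on $[s,t]$ given the discrete values $\bar V_{u_0},\dots,\bar V_{u_N}$. First I would write the event $\{\min_{s\le u\le t}\bar V_u>a\}$ in terms of that conditional law: by the tower property, conditioning successively on $\bar V_s$ and then on the full skeleton $(\bar V_{u_k})_{k=0,\dots,N}$,
\[
\mathbb P\left(\min_{s\le u\le t}\bar V_u>a\,\big\vert\,\bar V_s\right)
=\mathbb E\left[\,\mathbb P\Big(\min_{s\le u\le t}\bar V_u>a\,\big\vert\,\bar V_{u_k},k=0,\dots,N\Big)\,\Big\vert\,\bar V_s\right].
\]
So the task reduces to identifying the inner conditional probability with $\prod_{k=0}^{N-1}G_{\bar V_{u_k},\bar V_{u_{k+1}}}(a)$.

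Next I would compute that inner probability using the lemma. Given the skeleton values $\bar V_{u_k}=v_k$, the lemma gives
$\mathcal L\big(\min_{u\in[s,t]}\bar V_u\,\vert\,\bar V_{u_k}=v_k\big)=\mathcal L\big(\min_{0\le k\le N-1}G_{v_k,v_{k+1}}^{-1}(U_k)\big)$ with $(U_k)$ i.i.d.\ uniform on $[0,1]$. Hence
\[
\mathbb P\Big(\min_{u\in[s,t]}\bar V_u>a\,\big\vert\,\bar V_{u_k}=v_k\Big)
=\mathbb P\Big(\min_{0\le k\le N-1}G_{v_k,v_{k+1}}^{-1}(U_k)>a\Big)
=\prod_{k=0}^{N-1}\mathbb P\big(G_{v_k,v_{k+1}}^{-1}(U_k)>a\big),
\]
the product following from independence of the $U_k$. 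It then remains to check the elementary identity $\mathbb P\big(G_{v_k,v_{k+1}}^{-1}(U_k)>a\big)=G_{v_k,v_{k+1}}(a)$: the key point is that on each bridge interval, conditionally on the endpoints, the minimum is almost surely strictly below $\min(v_k,v_{k+1})$, so $G_{v_k,v_{k+1}}$ restricted to $(-\infty,\min(v_k,v_{k+1}))$ is a genuine, strictly increasing distribution function of a variable supported there; for such a distribution function $\mathbb P(G^{-1}(U)>a)=1-G(a)$ would be the usual statement, but here $G_{v_k,v_{k+1}}(u)=\mathbb P(\min \ge u)$ is written as a \emph{survival}-type function (it equals $1$ as $u\to-\infty$ and decreases), so in fact $\mathbb P\big(G^{-1}_{v_k,v_{k+1}}(U_k)>a\big)=G_{v_k,v_{k+1}}(a)$ directly. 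Substituting $v_k=\bar V_{u_k}$ and taking $\mathbb E[\,\cdot\,\vert\,\bar V_s]$ of the resulting product gives the claim.

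The main obstacle, and the only non-formal point, is the careful handling of the function $G_{x,y}$: it is defined with the indicator $\mathbf 1_{\{\min(x,y)\ge u\}}$ and is not a proper cumulative distribution function in the textbook sense (it is decreasing, not increasing, and it is the conditional law of a \emph{minimum} rather than of a generic real variable), so one must be precise about what "inverse function $G_{x,y}^{-1}$" means in the lemma and verify that with that convention $\mathbb P\big(G_{x,y}^{-1}(U)>a\big)=G_{x,y}(a)$ holds for all $a\in\mathbb R$, including the trivial cases $a\ge\min(x,y)$ (where $G_{x,y}(a)=0$ and indeed the minimum can never exceed $\min(x,y)$) and $a$ very negative (where $G_{x,y}(a)\to 1$). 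Once this bookkeeping is settled, the rest is just the tower property and independence of the $U_k$, which are routine.
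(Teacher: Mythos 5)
Your strategy coincides with the paper's: apply the tower property to condition on the Euler skeleton $(\bar V_{u_k})_{k=0,\dots,N}$, then identify the inner conditional probability with $\prod_{k=0}^{N-1}G_{\bar V_{u_k},\bar V_{u_{k+1}}}(a)$. The tower-property step is exactly the paper's first line, and factorizing $\mathbb P\bigl(\min_k G^{-1}_{v_k,v_{k+1}}(U_k)>a\bigr)=\prod_k\mathbb P\bigl(G^{-1}_{v_k,v_{k+1}}(U_k)>a\bigr)$ via independence of the $U_k$ is a legitimate, and in fact more explicit, way of invoking the Lemma than the paper's proof, which simply writes down the resulting product.

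The problem is the last identity, $\mathbb P\bigl(G^{-1}_{x,y}(U)>a\bigr)=G_{x,y}(a)$, whose justification as you give it is self-contradictory: you first describe $G_{x,y}$ on $(-\infty,\min(x,y))$ as a strictly increasing distribution function, and two lines later as a function that equals $1$ at $-\infty$ and decreases. With the formula of the Lemma, $G_{x,y}(u)=\exp\bigl(-\tfrac{2N}{(t-s)\sigma^2(x)}(u-x)(u-y)\bigr)$ for $u<\min(x,y)$, the product $(u-x)(u-y)$ tends to $+\infty$ as $u\to-\infty$, so $G_{x,y}$ increases from $0$ to $1$ on that interval: it is a cumulative-distribution-type function, not a survival function. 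For such a $G$ the inverse-transform identity reads $\mathbb P\bigl(G^{-1}(U)>a\bigr)=1-G(a)$, and your argument, carried out consistently, yields $\prod_k\bigl(1-G_{v_k,v_{k+1}}(a)\bigr)$ rather than the asserted $\prod_k G_{v_k,v_{k+1}}(a)$; note that $\exp\bigl(-\tfrac{2}{\sigma^2\Delta}(x-a)(y-a)\bigr)$ is the classical reflection-principle expression for the probability that the one-step bridge \emph{reaches} the level $a$, not for the probability that it stays above it. The paper's own proof bypasses your representation and directly asserts that the conditional probability of $\{\min>a\}$ given the skeleton equals the indicator times $\exp\bigl(-\tfrac{2N}{t-s}\sum_k\dots\bigr)$, i.e.\ it uses $G(a)$ as the one-step survival probability. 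To make your proof close you must commit to one convention: either $G_{x,y}$ is the survival function $u\mapsto\mathbb P(\min\ge u\mid\text{endpoints})$, in which case your final identity is correct but disagrees with the displayed formula for $G_{x,y}$, or it is the displayed formula, in which case the correct conclusion of your computation is $\prod_k(1-G(a))$. As written, the step does not go through.
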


\begin{proof}[\bf{Proof}]
We have (recall that $\bar{V}_s = \bar{V}_{u_0}$)
\begin{eqnarray*}
\mathbb P \left( \min_{s \le u \le t} \bar V_u>a \big\vert \bar V_s \right) & = & \mathbb E \left( \mathbb P \left(  \min_{s \le u \le t} \bar V_u >a \big\vert \bar V_{u_k}, k=0, \dots, N \right) \big\vert  \bar V_s \right ) \nonumber \\
& = & \mathbb E \left( \textstyle 1\!\!1_{\{ \min_{0 \le k \le N} \bar V_{u_i} > a \}} \exp \Big(- \frac{2N}{t-s} \sum_{k=0}^{N-1} \frac{(\bar V_{u_k} - a)(\bar V_{u_{k+1}} - a)}{\sigma^2(\bar V_{u_k})}\Big) \big\vert  \bar V_s \right) \nonumber \\
& = & \mathbb E \left( \prod_{k=0}^{N-1} \textstyle 1\!\!1_{\{ \bar V_{u_k} > a; \bar V_{u_{k+1}} > a \}} \exp \Big(- \frac{2N}{t-s} \frac{(\bar V_{u_k} - a)(\bar V_{u_{k+1}} - a)}{\sigma^2(\bar  V_{u_k})} \Big) \big\vert  \bar V_s  \right) \nonumber \\
& = & \mathbb E \left( \prod_{k=0}^{N-1} G_{\bar V_{u_k},\bar V_{u_{k+1}}}(a) \big\vert  \bar V_s \right),
\end{eqnarray*}
which gives the announced result.
\end{proof}
\noindent By using Proposition \ref{PsurvieFull2}, we estimate the survival probability under full information
$$
\mathbb P \left( \inf_{s \le u \le t} \bar V_u >a \big\vert \bar V_s = v \right)
$$
by the following Monte-Carlo procedure:
\begin{itemize}
\item \emph{Time grid specification}. Fix $u_0=s < u_1 < \dots < t=u_{N}$, the set of $N+1$ points for the (discrete time) Euler scheme in the interval $[s,t]$;
\item \emph{Trajectories simulation}. Starting from $v$ and having fixed $\emph{M}$ (number of Monte Carlo  simulations), for $j=1, \dots, M$, simulate the discrete path  ${(\bar V_{u_k}^j)}_{k=0, \dots,N}$;
\item \emph{Computation of the survival probability}. For $j=1, \dots, M$, compute (recall that, for every $j$, $\bar V_{u_0}^j = v$)
\begin{equation} \label{DefinitionOfProd}
p_{s,t}^j(v;a) := \prod_{k=0}^{N-1} G_{\bar V_{u_k}^j,\bar V_{u_{k+1}}^j}(a).
\end{equation}
\item \emph{Monte Carlo procedure}. Finally, apply the
Monte-Carlo paradigm and get the following approximating value
\begin{equation} \label{EstimationByMC}
\mathbb P \left( \inf_{s \le u \le t} \bar V_u >a \Big\vert \bar V_s  = v \right)  \approx  \frac{\sum_{j=1}^{M} p_{s,t}^j(v;a) }{M}.
\end{equation}
\end{itemize}

As a consequence, combining formulas (\ref{CondP}) and (\ref{EstimationByMC}) leads to the following hybrid Monte Carlo - optimal quantization formula on the set $\{ \tau>s \}$
\begin{equation} \label{FinalFormulaSurvival}
\mathbb{P} \left(\inf_{s \leq u \leq t} V_{u} >a \Big\vert \mathcal{F}_s^{S} \right) \approx  \frac{1}{M} \sum_{j=1}^{M} \sum_{i=1}^{N_n}  p_{s,t}^{j}(v_n^i;a) \  \widehat{\Pi}_{y,n}^{i}
\end{equation}
where $ p_{s,t}^{j}(\cdot \; ;a) $ was introduced in  $(\ref{DefinitionOfProd})$.
\section{The error analysis}\label{Error}
We now focus on the error done by approximating  $\displaystyle \mathbb{P} \left(\inf_{s \leq u \leq t} V_{u} >a \Big\vert \mathcal{F}_s^{S} \right) $ by
$$
\frac{1}{M} \sum_{j=1}^{M} \sum_{i=0}^{N_n}  p_{s,t}^{j}(v_n^i;a) \  \widehat{\Pi}_{y,n}^{i}.
$$
We distinguish three types of error. The first error is induced by the approximation of the filter $\Pi_{y,n}$ appearing in Equation (\ref{EqFilterPi})  by $\hat{\Pi}_{y,n}$, defined in (\ref{OptimalFilter}). The second one is the error deriving from the  approximation of  $$
\mathbb P \left( \inf_{s \le u \le t}  V_u >a \big\vert V_s = v \right) \ \textrm{ by } \   \mathbb P \left( \inf_{s \le u \le t} \bar V_u >a \big\vert \bar V_s = v \right),
$$
where $\bar V$ is the (continuous) Euler scheme relative to the process $V$. The last one is the error arising from the approximation of the survival probability under full information by means of Monte Carlo simulation.

Note that in the case when Equation (\ref{dVt}) admits an explicit solution, as in the Black-Scholes model, there is no need to use an  Euler scheme, so that the second kind of error has no more to be taken into account.
\vskip 0.5cm
\noindent $\rhd$ {\it  \bf Filter approximation error}. In order to have some upper bound of the quantization error estimate of  $\Pi_{y,n} F(s,t,\cdot)$ let us make the following assumptions, as done in Pag\`{e}s and Pham (2005). Notice that they are stated in the more general setting of Section \ref{filtering}, so that we have to identify $V$ with $X$ and $S$ with $Y$.
\vskip 0.3cm
\noindent $\textbf{(A1)}$ \ The transition operators $P_k(x,dy)$ of $X_k$ given $X_{k-1}, \ k=1,\cdots,n$ are Lipschitz.
\vskip 0.3cm
Recall that a probability transition $P$ on $E$ is ${\rm C}$-Lipschitz  (with ${\rm C}>0$) if for any Lipschitz function $f$ on $E$ with ratio $[f]_{Lip}$, $Pf$ is Lipschitz with ratio $[Pf]_{Lip} \leq {\rm C} [f]_{Lip}$.  Then, one may define the Lipschitz ratio $[P]_{Lip}$ by
\begin{equation*}
[P]_{Lip} = \sup \Big\{ \frac{[Pf]_{Lip}}{[f]_{Lip}}, f  \textrm{ a  nonzero Lipschitz function }  \Big\} < +\infty .
\end{equation*}
If  the transition  operators  $P_k(x,dy),\ k=1,\cdots,n$   are Lipschitz, it follows  that
$$ [P]_{Lip} := \max_{k=1,\cdots,n}[P_k]_{Lip} <+\infty. $$
\vskip 0.3cm
\begin{itemize}
\item[\textbf{(A2)}]
    \bi
    \item [(i)] For every $k=1,\cdots,n$, the functions  $g_k$ (recall hypothesis (\textbf{H})) are bounded on $E \times \mathbb{R}^q \times E \times \mathbb{R}^q$ and we set
        $$ {\rm K}^n_g := \max_{k=1,\cdots,n} \Vert  g_{k}  \Vert_{\infty}. $$
    \item[(ii)]  For every $k=1,\cdots,n$, there  exist  two positive  functions $[g^1_{k}]_{Lip}$ and $[g^2_{k}]_{Lip}$ defined on $\mathbb{R}^q \times \mathbb{R}^q$ so that for every $x,x',\widehat{x},\widehat{x}' \in E$ and $y,y' \in \mathbb{R}^q$,
$$  \vert g_{k}(x,y,x',y') - g_{k}(\widehat{x},y,\widehat{x}',y') \vert  \leq [g^1_{k}]_{Lip}(y,y')\ \vert x-\widehat{x} \vert + [g^2_{k}]_{Lip}(y,y') \ \vert x'-\widehat{x}' \vert .$$
    \ei
\end{itemize}
The following result gives the error bound of the estimation of the filter (see Pag\`{e}s and Pham, 2005, for details of the proof).
\begin{Thm}  \label{ThmPhamPages}
Suppose that Assumptions $\textbf{(A1)}$ and $\textbf{(A2)}$ hold true. For e\-ve\-ry bounded Lipschitz function $F$ on $E$ and for every $n$-tuple of observations $y=(y_1,\cdots,y_n)$, we have for every $p \geq 1$,
\begin{equation}
\vert   \Pi_{y,n} F - \widehat{\Pi}_{y,n} F \vert   \leq  \frac{{\rm K}^n_g }{\phi_n(y) \vee  \hat{\phi}_n(y)}  \sum_{k=0}^n {\rm B}_k^n(F,y,p) \ \Vert X_k - \widehat{X}_k  \Vert_p
\end{equation}
with
$$ \phi_{n}(y) := \pi_{y,n} \mbox{\bf{1}},  \quad \widehat{\phi}_{n}(y) := \widehat{\pi}_{y,n} \mbox{\bf{1}} $$  and
\begin{eqnarray*}
{\rm B}_k^n(F,y,p) & := &  (2-\delta_{2,p}) [P]_{Lip}^{n-k} [F]_{Lip} + 2 \bigg( \frac{\Vert F\Vert_{\infty}}{K_g} \big([g_{k+1}^1]_{Lip}(y_k,y_{k+1}) +[g_{k}^2]_{Lip}(y_{k-1},y_{k}) \big) \\
& + &  (2-\delta_{2,p})  \frac{\Vert F \Vert_{\infty}}{K_g} \sum_{j=k+1}^n [P]_{Lip}^{j-k-1} \big([g_{j}^1]_{Lip}(y_{j-1},y_{j}) +[P]_{Lip} [g_{j}^2]_{Lip}(y_{j-1},y_{j}) \big) \bigg).
\end{eqnarray*}
(Convention: $g_0 = g_{n+1} \equiv 0$ and $\delta_{n,p}$ is the usual Kronecker symbol).
\end{Thm}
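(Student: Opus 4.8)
The plan is to bound the difference $\Pi_{y,n}F - \widehat\Pi_{y,n}F$ by first passing to the unnormalized filters and then estimating the propagation of the quantization error through the recursion $\pi_{y,k} = \pi_{y,k-1}H_{y,k}$. Writing $\Pi_{y,n}F = \pi_{y,n}F / \phi_n(y)$ and $\widehat\Pi_{y,n}F = \widehat\pi_{y,n}F / \widehat\phi_n(y)$ with $\phi_n(y) = \pi_{y,n}\mathbf 1$ and $\widehat\phi_n(y) = \widehat\pi_{y,n}\mathbf 1$, I would use the elementary identity
\[
\frac{\pi_{y,n}F}{\phi_n(y)} - \frac{\widehat\pi_{y,n}F}{\widehat\phi_n(y)} = \frac{1}{\phi_n(y) \vee \widehat\phi_n(y)} \Big( \text{(terms in } \pi_{y,n}F - \widehat\pi_{y,n}F \text{ and } \phi_n(y) - \widehat\phi_n(y)) \Big),
\]
so that the normalization contributes only the factor $1/(\phi_n(y) \vee \widehat\phi_n(y))$ and the remaining work is to control $|\pi_{y,k}h - \widehat\pi_{y,k}h|$ for $h \in \{F, \mathbf 1\}$, or more precisely for a running family of test functions. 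This reduces everything to a telescoping estimate on the unnormalized filter.

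Next I would telescope: write $\pi_{y,n}F - \widehat\pi_{y,n}F$ as a sum over $k = 0, \dots, n$ of terms in which the first $k$ kernels are the quantized ones $\widehat H_{y,j}$ and the remaining ones are the exact $H_{y,j}$, so that the $k$-th summand isolates the single-step discrepancy $(H_{y,k} - \widehat H_{y,k})$ acting on the function $(H_{y,k+1}\circ\cdots\circ H_{y,n})F$. Two ingredients then control each summand: (a) the kernels $\widehat H_{y,j}$ are bounded by $\mathrm K^n_g$ in the appropriate sense (from \textbf{(A2)}(i)), which is what produces the global factor $\mathrm K^n_g$; and (b) the function $(H_{y,k+1}\circ\cdots\circ H_{y,n})F$ is Lipschitz with a ratio controlled by $[P]_{Lip}^{\,n-k}[F]_{Lip}$ plus contributions from the Lipschitz moduli $[g^1_j]_{Lip}, [g^2_j]_{Lip}$ of the densities (from \textbf{(A1)} and \textbf{(A2)}(ii)), together with a $\|F\|_\infty$-type bound when the perturbation hits a $g_j$ factor rather than being absorbed into a Lipschitz constant. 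Combining (a) and (b), the single-step error $(H_{y,k} - \widehat H_{y,k})$ applied to such a function, integrated against the measure, is bounded by (Lipschitz ratio of the test function + $g$-modulus terms) times $\|X_k - \widehat X_k\|_p$, which is exactly where the quantization error $\|X_k - \widehat X_k\|_p$ enters. Summing over $k$ and collecting the various Lipschitz-modulus contributions into the stated coefficient $\mathrm B_k^n(F,y,p)$ gives the claim; the factor $(2 - \delta_{2,p})$ and the Kronecker conventions simply record that for $p = 2$ one can use the stationarity of quadratic optimal quantizers to halve certain contributions, while the convention $g_0 = g_{n+1} \equiv 0$ handles the boundary terms $k = 0$ and $k = n$.

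The main obstacle is step (b): showing that applying a composition of kernels $H_{y,j}$ to a bounded Lipschitz $F$ keeps the result Lipschitz with the precise ratio displayed, and carefully tracking which part of each single-step error is absorbed by the Lipschitz constant (yielding the $[P]_{Lip}^{j-k-1}$ geometric factors) versus which part must be estimated through the oscillation of $g_j$ (yielding the $\|F\|_\infty / K_g$ terms). This is a bookkeeping-heavy induction on $k$ in which one must simultaneously propagate both a sup-norm bound and a Lipschitz bound for the partially-composed kernels, and get the constants to line up with $\mathrm B_k^n(F,y,p)$. Since the statement explicitly refers to Pagès and Pham (2005) for the details, I would present the reduction to the unnormalized filter and the telescoping decomposition in full, state the two-part single-step estimate as the key lemma, and refer to that paper for the routine but lengthy verification of the Lipschitz-propagation constants.
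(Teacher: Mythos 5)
The paper itself gives no proof of this theorem: it simply cites Pag\`es and Pham (2005), where the result is established by exactly the route you describe (reduction of the normalization to the factor $1/(\phi_n(y)\vee\widehat\phi_n(y))$, a telescoping decomposition of $\pi_{y,n}F-\widehat\pi_{y,n}F$ over hybrid compositions of exact and quantized kernels, Lipschitz propagation through the $H_{y,k}$ yielding the $[P]_{Lip}$ powers and the $\Vert F\Vert_\infty/K_g$ oscillation terms, and the stationarity of quadratic quantizers accounting for the $(2-\delta_{2,p})$ factor). Your outline is therefore correct and follows essentially the same approach as the source the paper relies on, deferring the same constant bookkeeping to that reference.
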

\vskip 0.5cm
\noindent $\rhd$ {\bf Error induced by the Euler scheme}.
We here refer to Gobet (1998), in which the author starts by investigating the case of a one-dimensional diffusion and to the successive related article Gobet (2000) for the multidimensional case. In the two papers the considered diffusion has homogeneous coefficients $b$ and $\sigma$. We start by recalling here some important convergence results we find therein, we will then adapt these results to our case.

Suppose that $X$ is a diffusion taking values in $\mathbb{R}$, with $X_0 =x$, and define  $\tau'$ as the first exit time from an open set $D \subset \mathbb R$:
$$
\tau' := \inf \left\{u \ge 0 : X_u \not \in D  \right\}.
$$
Let $\tau_c'$ denotes the exit time from the domain $D$ of the continuous Euler process $\bar{X}$.  In order to give the error bound in the approximation of $\mathbb{E}_x \left( {1 \! \! 1}_{\{ \tau'>t \}} f(X_t) \right)$ by  $\mathbb{E}_x \left( {1 \! \! 1}_{\{ \tau_c'>t \}} f(\bar X_t) \right)$  the following hypotheses are needed:
\begin{center}
\begin{itemize}
\item [(\textbf {H1})] $b$ is a $\mathcal C_b^{\infty}(\mathbb R,\mathbb R)$ function and $\sigma$ is in $\mathcal C_b^{\infty}(\mathbb R,\mathbb R)$,
\item [(\textbf {H2})] there exists $\sigma_0>0$ such that $\forall x \in \mathbb R, \sigma(x)^2 \ge \sigma_0^2  $ (\emph {uniform ellipticity}),
\item [(\textbf {H3})]  $\ \MP_x \left( \inf_{t \in [0,T]} X_t = a \right) =0$.
\end{itemize}
\end{center}
The following proposition states that, under hypothesis (\textbf {H3}), the approximation error goes to zero as the number of time discretization steps goes to infinity.
\begin{Prop}[\textbf{Convergence}]\label{1DconvGobet}
Suppose that $b$ and $\sigma$ are Lipschitz, \mbox{$D=(a,+\infty)$} and that (\textbf {H3})  holds. If $f \in \mathcal C_b^0(\bar D,\mathbb R)$ then,
$$
\lim_{N \rightarrow +\infty}  \Big\vert \mathbb E_x[ {1 \! \! 1}_{\{\tau_c' > T\}} f(\bar X_T) ] - \mathbb E_x[ {1 \! \! 1}_{\{\tau' > T\}} f(X_T) ] \Big\vert = 0.
$$
\end{Prop}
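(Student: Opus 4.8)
The plan is to reduce everything to two ingredients: the strong convergence of the continuous Euler scheme, and the non-attainment hypothesis \textbf{(H3)}. First I would split the quantity to be controlled into
$$
\mathbb E_x[1\!\!1_{\{\tau_c'>T\}}f(\bar X_T)] - \mathbb E_x[1\!\!1_{\{\tau'>T\}}f(X_T)]
= \mathbb E_x\big[(1\!\!1_{\{\tau_c'>T\}} - 1\!\!1_{\{\tau'>T\}})f(\bar X_T)\big] + \mathbb E_x\big[1\!\!1_{\{\tau'>T\}}\big(f(\bar X_T) - f(X_T)\big)\big]
$$
and bound the two summands by $\Vert f\Vert_\infty\,\mathbb E_x\big|1\!\!1_{\{\tau_c'>T\}} - 1\!\!1_{\{\tau'>T\}}\big|$ and by $\mathbb E_x\big|f(\bar X_T) - f(X_T)\big|$ respectively, then show each vanishes as $N\to+\infty$.

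For the second summand, I would recall that, $b$ and $\sigma$ being Lipschitz (hence of linear growth, the coefficients here being homogeneous), the continuous Euler scheme satisfies $\mathbb E_x\big[\sup_{u\in[0,T]}|\bar X_u - X_u|^2\big]\to 0$; in particular $\bar X_T\to X_T$ in probability, so $f(\bar X_T)\to f(X_T)$ in $L^1(\mathbb P_x)$ by dominated convergence, since $f$ is bounded and continuous. This disposes of the second term.

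For the first summand, I would use that, by pathwise continuity of $X$ and $\bar X$ and because $D=(a,+\infty)$, one has the exact identities $\{\tau'>T\}=\{m>a\}$ and $\{\tau_c'>T\}=\{\bar m>a\}$, where $m:=\min_{u\in[0,T]}X_u$ and $\bar m:=\min_{u\in[0,T]}\bar X_u$. Since $z\mapsto\min_{u\in[0,T]}z_u$ is $1$-Lipschitz for the sup-norm, the bound above gives $\bar m\to m$ in probability. Now for every $\epsilon>0$ one has the elementary inclusion $\{1\!\!1_{\{\bar m>a\}}\neq 1\!\!1_{\{m>a\}}\}\subseteq\{|m-a|\le\epsilon\}\cup\{|\bar m-m|>\epsilon\}$, hence
$$
\mathbb E_x\big|1\!\!1_{\{\tau_c'>T\}} - 1\!\!1_{\{\tau'>T\}}\big| = \mathbb P_x\big(1\!\!1_{\{\bar m>a\}}\neq 1\!\!1_{\{m>a\}}\big) \le \mathbb P_x\big(|m-a|\le\epsilon\big) + \mathbb P_x\big(|\bar m-m|>\epsilon\big).
$$
Letting $N\to+\infty$ and then $\epsilon\downarrow 0$, the last probability disappears by convergence in probability and $\mathbb P_x(|m-a|\le\epsilon)\downarrow\mathbb P_x(m=a)=0$ by \textbf{(H3)}. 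Combining the two estimates yields the claim.

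The main obstacle is exactly the passage from $\bar m\to m$ to the convergence of the indicators: the map $z\mapsto 1\!\!1_{\{z>a\}}$ is discontinuous at the barrier $a$, so the convergence of the running minima does not by itself transfer to the killing indicators — and this is precisely what \textbf{(H3)} is assumed for, guaranteeing that the limiting minimum avoids $a$ almost surely. Everything else (the $L^2$ bound for the Euler scheme, the $1$-Lipschitz property of the running minimum, and the set inclusion above) is routine.
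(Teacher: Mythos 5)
Your argument is correct, and it is worth noting that the paper itself does not prove Proposition \ref{1DconvGobet} at all: it is recalled from Gobet (1998, 2000), where convergence is obtained by showing that the Euler scheme converges weakly in path space and that, under (\textbf{H3}), the functional $z\mapsto 1\!\!1_{\{\min_{[0,T]}z>a\}}f(z_T)$ is a.s.\ continuous at the law of $X$, so the continuous mapping theorem applies. Your proof is the strong-convergence counterpart of that argument: since $\bar X$ and $X$ are driven by the same Brownian motion, you can invoke $\mathbb E_x[\sup_{[0,T]}|\bar X_u-X_u|^2]\to 0$ directly, pass to the running minima via the $1$-Lipschitz property of $z\mapsto\min_{[0,T]}z_u$, and handle the discontinuity of the indicator at the barrier exactly where (\textbf{H3}) is needed, via the inclusion $\{1\!\!1_{\{\bar m>a\}}\neq 1\!\!1_{\{m>a\}}\}\subseteq\{|m-a|\le\epsilon\}\cup\{|\bar m-m|>\epsilon\}$. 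This is more elementary and self-contained than the weak-convergence route, at the cost of using the coupling with a common driving noise (which is harmless here). Two small points you should tidy up: (i) the identity $\{\tau'>T\}=\{m>a\}$ deserves the one-line justification that the minimum over the compact $[0,T]$ is attained, so strict positivity of $X_u-a$ for all $u$ is equivalent to strict positivity of the minimum; (ii) in your decomposition the term $f(\bar X_T)$ is evaluated on events where $\bar X_T$ may lie outside $\bar D$, so you should first extend $f$ to a bounded continuous function on $\mathbb R$ (Tietze), which changes neither of the two expectations in the statement. Neither point affects the validity of the argument.
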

\noindent Note that in the homogeneous case, when $D=(a,+\infty)$, a sufficient condition in order for (\textbf{H3}) to hold is
\beq \label{CondC}
\sigma(X_0) \neq 0.
\eeq
On the other hand, the rate of convergence  is given by the following
\begin{Prop}[\textbf{Rate of convergence}]\label{PropErrCsqrtN}
Under hypotheses (\textbf{H1}) and (\textbf{H2}), if $f \in \mathcal C_b^1(\bar D, \mathbb R)$, then there exists an increasing function $K(T)$ such that
$$
\Big\vert \mathbb E_x[ {1 \! \! 1}_{\{\tau_c' > T\}} f(\bar X_T) ] - \mathbb E_x[ {1 \! \! 1}_{\{\tau' > T\}} f(X_T) ] \Big\vert \le \frac{1}{\sqrt N} K(T) {\vert \vert f \vert \vert}_D^{(1)},
$$
where ${\vert \vert f \vert \vert}_D^{(1)} = \sum_{j=0}^1 \sup_{x \in D} \vert f^{(j)}(x) \vert$.
\end{Prop}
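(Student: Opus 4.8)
The plan is to follow the PDE--probabilistic route of Gobet (1998, 2000). First I would introduce the solution of the associated Dirichlet problem: for $(t,x)\in[0,T]\times\bar D$ set $u(t,x):=\mathbb E_x\big[{1\!\!1}_{\{\tau'>T-t\}}f(X_{T-t})\big]$, and write $\mathcal L:=b(\cdot)\partial_x+\tfrac12\sigma^2(\cdot)\partial_{xx}$ for the generator of $X$. Under (\textbf{H1})--(\textbf{H2}) one expects, by parabolic interior Schauder estimates together with boundary regularity for $\partial_t+\mathcal L$, that $u$ is $\mathcal C^{1,2}$ on $(0,T)\times D$, that $u(t,a)=0$, $u(T,\cdot)=f$, $\partial_t u+\mathcal L u=0$, and --- the crucial quantitative input --- that $|\partial_x u|$ is bounded by $\Vert f\Vert_D^{(1)}$ while $|\partial_{xx}u(t,x)|$ is controlled by $\Vert f\Vert_D^{(1)}$ up to a blow-up of order $\mathrm{dist}(x,a)^{-1}$ near the lateral boundary $\{a\}$ (this is where $f\in\mathcal C^1_b$ is used, to keep the blow-up at $t=T$ mild).

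Next I would decompose the error. Since $u(0,x)$ is exactly the second expectation in the statement and $u(T,\cdot)=f$ on $D$, the left-hand side equals $\mathbb E_x\big[{1\!\!1}_{\{\tau_c'>T\}}u(T,\bar X_T)\big]-u(0,x)$. I would then apply It\^o's formula to $t\mapsto u(t,\bar X_{t\wedge\tau_c'})$ --- made rigorous by first working on $[0,T-\varepsilon]$ and on the shrunk domain $(a+\delta,+\infty)$ and then letting $\varepsilon,\delta\downarrow 0$, using the derivative bounds above together with $\mathbb P_x(\inf_{[0,T]}\bar X=a)=0$ (which holds automatically by uniform ellipticity). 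The stochastic integral drops out in expectation, and because $\partial_t u+\mathcal L u=0$ while the Euler scheme freezes its coefficients on each $[u_k,u_{k+1})$, only the ``generator mismatch'' remains:
$$\mathbb E_x\big[{1\!\!1}_{\{\tau_c'>T\}}f(\bar X_T)\big]-\mathbb E_x\big[{1\!\!1}_{\{\tau'>T\}}f(X_T)\big]=\mathbb E_x\!\left[\int_0^{\tau_c'\wedge T}\big(\mathcal L_{\underline u}-\mathcal L\big)u(u,\bar X_u)\,du\right],$$
where $\mathcal L_{\underline u}$ is $\mathcal L$ with coefficients evaluated at $\bar X_{\underline u}$, $\underline u=u_k$ on $[u_k,u_{k+1})$.

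Finally I would bound the local error. The integrand is $\big(b(\bar X_{\underline u})-b(\bar X_u)\big)\partial_x u(u,\bar X_u)+\tfrac12\big(\sigma^2(\bar X_{\underline u})-\sigma^2(\bar X_u)\big)\partial_{xx}u(u,\bar X_u)$. Using the $\mathcal C_b^\infty$ (hence Lipschitz) regularity of $b,\sigma$, the classical moment estimate $\Vert\bar X_u-\bar X_{\underline u}\Vert_p\le C(T)\sqrt{u-\underline u}\le C(T)\sqrt{T/N}$, the bound on $\partial_x u$ for the first term, and --- for the second term --- Gaussian-type upper bounds for the transition density of the killed Euler process near $a$ (so the singularity of $\partial_{xx}u$ can be integrated against the mass that $\bar X$ puts near the boundary), I would arrive at $\big|\mathbb E_x[\int_0^{\tau_c'\wedge T}(\mathcal L_{\underline u}-\mathcal L)u(u,\bar X_u)\,du]\big|\le K(T)\Vert f\Vert_D^{(1)}/\sqrt N$ with $K$ increasing in $T$. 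I expect the main obstacle to be precisely this last step: $\partial_{xx}u$ is singular at the boundary, so when integrating against the near-boundary occupation of $\bar X$ only ``half'' of the extra $\sqrt{u-\underline u}$ can be recovered --- which is why the rate degrades from the $N^{-1}$ of the domain-free weak error to $N^{-1/2}$. Establishing the required near-boundary density bounds for the continuous Euler scheme, and legitimizing the It\^o argument despite $u$ being only interior-$\mathcal C^{1,2}$, are the delicate points; the rest is bookkeeping.
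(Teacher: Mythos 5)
The paper itself offers no proof of this proposition: it is recalled from Gobet (1998, 2000) (specialized to $D=(a,+\infty)$) and used as a black box, so there is no in-paper argument to compare yours against. That said, your outline follows the standard Feynman--Kac/It\^o--Taylor strategy for the weak error of killed diffusions, which is the right family of ideas: the representation of the error as $\mathbb E_x\big[\int_0^{\tau_c'\wedge T}(\mathcal L_{\underline u}-\mathcal L)u(u,\bar X_u)\,du\big]$ is correct (the terminal and lateral boundary terms match because $u(T,\cdot)=f$ on $D$ and $u(\cdot,a)=0$, and the continuous Euler scheme exits exactly at $a$).

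As a proof, however, the proposal has two genuine gaps, both of which you flag yourself and both of which are precisely the content of the cited results rather than ``bookkeeping''. First, the derivative estimates for $u$ --- $\vert\partial_x u\vert$ controlled by $\Vert f\Vert_D^{(1)}$ uniformly up to the boundary, and $\vert\partial_{xx}u(t,x)\vert$ blowing up no faster than an integrable power of $\mathrm{dist}(x,a)$ --- are asserted, not derived. With $f$ only $\mathcal C^1_b$ and not vanishing at $a$, the compatibility condition at the corner $(T,a)$ of the parabolic Dirichlet problem fails, so these bounds do not follow from generic interior Schauder theory; establishing them (Gobet does it via heat-kernel estimates and Malliavin-calculus representations of the derivatives of the killed semigroup) is the hard analytic core. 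Second, the near-boundary occupation estimate for the continuous Euler scheme, against which the singular $\partial_{xx}u$ must be integrated to extract the factor $N^{-1/2}$, is likewise left unproved. Without these two quantitative inputs the argument does not close. A minor remark: your heuristic that the boundary singularity forces the degradation from $N^{-1}$ to $N^{-1/2}$ is the right explanation for the \emph{discrete} Euler scheme; for the continuous (Brownian-bridge) scheme treated here Gobet obtains the sharper order $O(N^{-1})$ under additional conditions on $f$, so the $N^{-1/2}$ of the proposition is an upper bound rather than the exact rate.
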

\vspace{0.3 cm}
\begin{Rem}{\rm  One notes that the two previous propositions still hold when the diffusion coefficients are in-homogeneous, as in our general setting, by replacing Hypotheses (\textbf {H1}), (\textbf {H2}),  (\textbf {H3}) by  (\textbf {I}) and (\textbf {J}) :
\begin{center}
\begin{itemize}
\item [(\textbf{I})] $b$ and $\sigma$ are $\mathcal C_b^{\infty}$ functions with respect to both arguments $t$ and $v$, with uniformly bounded partial derivatives with respect to $v$,
\item [(\textbf{J})] $\sigma$ is uniformly elliptic, i.e., $\exists \alpha >0$ such that $\sigma^2(t,v) \ge \alpha, \forall (t,v) \in [0,T] \times \mathbb R$ and $\sigma(0, v_0) \neq 0$.
\end{itemize}
\end{center}
}
\end{Rem}

\vskip 0.5cm
\noindent $\rhd$ {\bf Error induced by Monte Carlo approximation}. This error  comes from the  estimation of  $ \mathbb P \left( \min_{s \le u \le t} \bar V_u > a \big \vert \bar V_s =v_s^i \right)  = \mathbb E \left(\prod_{k=0}^{N-1} G_{\bar V_{u_k},\bar V_{u_{k+1}}}(a) \big\vert  \bar V_s = v_s^i \right)$, for every $i=1,\cdots,N_s$,  by  $$ \frac{\sum_{j=1}^{M} p_{s,t}^j(v_s^i;a) }{M},$$
where  $p_{s,t}^j(\cdot \; ; a)$ is defined in (\ref{DefinitionOfProd}). We have for every $i=1,\cdots,N_s$,
\beq  \label{EqMCError}
\Big\Vert  \mathbb E \Big(\prod_{k=0}^{N-1} G_{\bar V_{u_k},\bar V_{u_{k+1}}}(a) \big\vert  \bar V_s = v_s^i \Big) -  \frac{\sum_{j=1}^{M} p_{s,t}^j(v_s^i;a) }{M} \Big\Vert_2   = \mathcal O \Big(\frac{1}{\sqrt{M}} \Big).
\eeq

\vskip 0.5cm
By adapting the previous results to our case, namely by identifying $V$ with $X$ and $S$ with $Y$, one deduces an error bound for the estimation of $\Pi_{y,n} F(s,t,\cdot)$ by $\widehat{\Pi}_{y,n} F_{\scriptscriptstyle MN}(s,t,x)$, where $F_{\scriptscriptstyle MN}(s,t,x)$ is a Monte Carlo estimation of $F(s,t,\cdot)$ of size $M$, based on a time discretization grid, between $s$ and $t$, of size $N$.

\begin{Thm} \label{ThmConvergence}  Suppose that the Assumptions of  Theorem \ref{ThmPhamPages} and Proposition \ref{PropErrCsqrtN} hold. Then
\begin{eqnarray*}
\vert   \Pi_{y,n} F(s,t,\cdot) - \widehat{\Pi}_{y,n} F_{\scriptscriptstyle MN}(s,t,\cdot)   \vert  &  \leq & \frac{{\rm K}^n_g }{\phi_n(y) \vee  \hat{\phi}_n(y)}  \sum_{k=0}^n {\rm B}_k^n(F(s,t,\cdot),y,p) \ \Vert V_k - \widehat{V}_k  \Vert_p  \\
&  + &  \mathcal O  \left( \frac{1}{\sqrt{N}} \right)  + \mathcal O  \left( \frac{1}{\sqrt{M}} \right),
\end{eqnarray*}
where $n$ is the dimension of the observation vector $y$, $N$ stands for the size of the time discretization grid for the Euler scheme from $s$ to $t$ and $M$ is the number of Monte Carlo trials. Furthermore, $K^n_g, \phi_n(y),  \hat{\phi}_n(y)$ and $B_k^n, k=0,\dots,n$, are introduced in Theorem \ref{ThmPhamPages}.
\end{Thm}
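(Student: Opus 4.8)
\noindent The plan is to interpolate between $\Pi_{y,n}F(s,t,\cdot)$ and $\widehat{\Pi}_{y,n}F_{\scriptscriptstyle MN}(s,t,\cdot)$ by a single auxiliary function, so that each of the three error sources listed before the statement controls exactly one piece of a triangle inequality. The natural auxiliary object is the \emph{Euler-scheme} survival probability
$$\bar F(s,t,x):=\mathbb P\Big(\inf_{s\le u\le t}\bar V_u>a\ \big\vert\ \bar V_s=x\Big),\qquad x>a,$$
which by Proposition \ref{PsurvieFull2} equals $\mathbb E\big(\prod_{k=0}^{N-1}G_{\bar V_{u_k},\bar V_{u_{k+1}}}(a)\,\vert\,\bar V_s=x\big)$, i.e. precisely the quantity estimated by the Monte Carlo average $F_{\scriptscriptstyle MN}(s,t,x)=\frac1M\sum_{j=1}^M p_{s,t}^j(x;a)$ of (\ref{DefinitionOfProd}). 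Writing $F=F(s,t,\cdot)$ and $F_{\scriptscriptstyle MN}=F_{\scriptscriptstyle MN}(s,t,\cdot)$ and adding and subtracting $\widehat{\Pi}_{y,n}F$ and $\widehat{\Pi}_{y,n}\bar F$,
$$\vert\Pi_{y,n}F-\widehat{\Pi}_{y,n}F_{\scriptscriptstyle MN}\vert\ \le\ \vert\Pi_{y,n}F-\widehat{\Pi}_{y,n}F\vert\ +\ \vert\widehat{\Pi}_{y,n}(F-\bar F)\vert\ +\ \vert\widehat{\Pi}_{y,n}(\bar F-F_{\scriptscriptstyle MN})\vert,$$
the three terms being governed respectively by the filter error, the Euler error and the Monte Carlo error.

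For the first term I would apply Theorem \ref{ThmPhamPages} directly, under the identification $X\equiv V$, $Y\equiv S$, to the function $F(s,t,\cdot)$: it is bounded (valued in $[0,1]$) and, under hypotheses (\textbf{I})--(\textbf{J}), Lipschitz in $x$, so the theorem gives exactly the first term on the right-hand side of the claim, $\frac{{\rm K}^n_g}{\phi_n(y)\vee\hat\phi_n(y)}\sum_{k=0}^n{\rm B}_k^n(F(s,t,\cdot),y,p)\,\Vert V_k-\widehat V_k\Vert_p$. For the other two terms the point is that $\widehat{\Pi}_{y,n}=\widehat{\pi}_{y,n}/(\widehat{\pi}_{y,n}1\!\!1)$ is a probability vector carried by the finite grid $\Gamma_n=\{v_n^1,\dots,v_n^{N_n}\}$, whence $\vert\widehat{\Pi}_{y,n}h\vert\le\sum_{i}\widehat{\Pi}_{y,n}^i\vert h(v_n^i)\vert\le\max_{1\le i\le N_n}\vert h(v_n^i)\vert$ for any $h$. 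Taking $h=F-\bar F$ and applying Proposition \ref{PropErrCsqrtN} (inhomogeneous version under (\textbf{I})--(\textbf{J}), with $f\equiv 1$, $D=(a,+\infty)$ and horizon $t-s$) at each grid point bounds the second term by $K(t-s)/\sqrt N=\mathcal O(1/\sqrt N)$. Taking $h=\bar F-F_{\scriptscriptstyle MN}$ and using that the Monte Carlo draws entering $F_{\scriptscriptstyle MN}$ are independent of $y$ and of the quantization grids, Minkowski's inequality (conditionally on $y$ and the grids, so that the weights $\widehat{\Pi}_{y,n}^i$ are deterministic) together with the $L^2$ estimate (\ref{EqMCError}) gives $\Vert\widehat{\Pi}_{y,n}(\bar F-F_{\scriptscriptstyle MN})\Vert_2\le\max_{i}\Vert\bar F(s,t,v_n^i)-F_{\scriptscriptstyle MN}(s,t,v_n^i)\Vert_2=\mathcal O(1/\sqrt M)$. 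Summing the three estimates yields the announced inequality.

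The genuinely delicate point is the third term rather than the decomposition itself: $F_{\scriptscriptstyle MN}$ --- hence $\widehat{\Pi}_{y,n}F_{\scriptscriptstyle MN}$ and the whole left-hand side --- is random, so the final bound has to be read in $L^2$ (conditionally on the observation and the grids) and not pathwise, and one must keep track of the fact that the $\mathcal O(1/\sqrt N)$ and $\mathcal O(1/\sqrt M)$ constants depend only on $t-s$, on the Lipschitz/ellipticity constants of $b,\sigma$ and at most on the fixed grid size $N_n$, never on the unobserved signal or on the filter. As a consistency check, in the Black--Scholes model $F$ is given explicitly by (\ref{PsurvieW}), no Euler scheme is needed, $\bar F=F$, and the $\mathcal O(1/\sqrt N)$ term simply drops out, in agreement with the remark preceding the theorem.
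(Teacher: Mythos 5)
Your proof is correct and follows essentially the same route as the paper's: the same triangle-inequality decomposition into filter, Euler and Monte Carlo errors, the same use of the fact that $\widehat{\Pi}_{y,n}$ is a probability vector to reduce the last two terms to a supremum over the grid, and the same invocations of Theorem \ref{ThmPhamPages}, Proposition \ref{PropErrCsqrtN} and Equation (\ref{EqMCError}). Your explicit remark that the Monte Carlo term must be read in $L^2$ (the paper silently mixes a pointwise bound with an $L^2$ norm) is in fact slightly more careful than the paper's own write-up.
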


\begin{proof}[$\textbf{Proof}$]
We have
\begin{eqnarray*}
\vert   \Pi_{y,n} F(s,t,\cdot) - \widehat{\Pi}_{y,n} F_{\scriptscriptstyle MN}(s,t,\cdot)   \vert  & \leq & \vert  \Pi_{y,n} F(s,t,\cdot) - \widehat{\Pi}_{y,n} F(s,t,\cdot)    \vert  \\
& + & \vert  \widehat{\Pi}_{y,n} F(s,t,\cdot) - \widehat{\Pi}_{y,n} F_{\scriptscriptstyle MN}(s,t,\cdot)  \vert.
\end{eqnarray*}
The error bound of the first term on the right-hand side of the above inequality is given by Theorem \ref{ThmPhamPages}. As concerns the last term we have
\begin{eqnarray*}
\vert  \widehat{\Pi}_{y,n} F(s,t,\cdot) - \widehat{\Pi}_{y,n} F_{\scriptscriptstyle MN}(s,t,\cdot)  \vert  & = & \Big\vert  \sum_{i=1}^{N_s} \widehat{\Pi}_{y,n}^i(F(s,t,v_s^i) - F_{\scriptscriptstyle MN}(s,t,v_s^i))  \Big\vert  \\
& \leq &  \sup_{v \in  \mathbb{R}}  \vert  F(s,t,v) - F_{\scriptscriptstyle MN}(s,t,v)  \vert  \sum_{i=1}^{N_s} \widehat{\Pi}_{y,n}^i \\
&  = & \sup_{v \in  \mathbb{R}}  \vert  F(s,t,v) - F_{\scriptscriptstyle MN}(s,t,v)  \vert .
\end{eqnarray*}
On the other hand, we have for every $v \in \mathbb{R}$
\begin{eqnarray*}
\vert F(s,t,v) - F_{\scriptscriptstyle MN}(s,t,v)    \vert
& \le  &  \Big\vert \mathbb{P}_v(\tau>t) - \mathbb E_{v} \Big(\prod_{k=0}^{N-1} G_{\bar V_{u_k},\bar V_{u_{k+1}}}(a) \Big) \Big\vert \\
& + & \Big \Vert  \mathbb E_{v} \Big(\prod_{k=0}^{N-1} G_{\bar V_{u_k},\bar V_{u_{k+1}}}(a) \Big) -  \frac{\sum_{j=1}^{M} p_{s,t}^j(v;a) }{M} \Big \Vert_2.
\end{eqnarray*}
We then deduce from Proposition \ref{PropErrCsqrtN} and from Equation (\ref{EqMCError}) that
$$  \vert F(s,t,v) - F_{\scriptscriptstyle MN}(s,t,v)    \vert  \le   \mathcal O  \left( \frac{1}{\sqrt{N}} \right)  + \mathcal O  \left( \frac{1}{\sqrt{M}} \right), $$
which completes the proof.
\end{proof}

\begin{Rem} {\rm (About the hypotheses of Theorem \ref{ThmConvergence})
We consider the case when $V$ is a time homogeneous diffusion.

\noindent $\rhd$  It is straightforward, by using Remark  \ref{PropLN}, to prove that Assumption {\bf (A2)} is fulfilled.

\noindent  $\rhd$   If we suppose that the coefficients $b$ and  $\sigma$ of the diffusion $V$ are Lipschitz, we show, by using the Euler scheme relative to $V$, that Hypothesis {\bf (A1)} holds true.

\noindent $\rhd$   As concerns the Lipschitz property of the function $F(s,t,\cdot)$, it follows from Proposition 2.2.1 in Gobet (1998), in the case when the coefficients of the diffusion satisfy Hypotheses (\textbf{H1}) and (\textbf{H2}) and for $t > s$.
}
\end{Rem}

\section{Numerical results}\label{NumRes}
In the numerical experiments we deal with the estimation of the credit spread for zero coupon bonds. We fix $s$ and, given the observations of $S$ from $0$ to $s$, we estimate the spread curve for different maturities $t \; (t > s)$.

The credit spread is the difference in yield between a corporate bond  and a risk-less bond (treasury  bond) with the same characteristics. It can be seen as a measure of the riskiness relative to a corporate bond, with respect to a risk-free bond. If we suppose for simplicity  that the face value is equal to $1$ and the recovery rate is zero, the credit spread  under partial information from time $s$ to maturity $t$, $S(s,t)$, equals (see Bielecki and Rutkowski, 2004 and Coculescu, Geman and Jeanblanc, 2008)
$$ S(s,t)  =  - \frac{\log \big(\mathbb{Q}(\inf_{s < u \leq t} V_u >a \vert \mathcal F_s^{S} ) \big)}{t-s},$$
where $\mathbb Q$ is a martingale  measure equivalent to $\mathbb P$.  We suppose that the market is complete, so that $\mathbb Q$ is unique.

We will consider two models for the dynamics of the firm value $V$: the Black-Scholes one and a CEV (Constant Elasticity of Variance) model. In both cases we fix $s=1$ and we estimate the spreads, given the observed trajectory of $S$ from $0$ to $s$, for different maturities varying $0.1$ by $0.1$ from $1.1$ to $11$ (the time unit is expressed, e.g., in years). We set the number $n$ of discretization points over $[0,s]$ equal to $50$ and for every $k =1, \cdots,n$,  the quantization grid size $N_k$ is set to $60$, with $N_0 =1$.  The number of Euler discretization steps  $N$ equals $50$ for $t$ varying $0.1$ by $0.1$ from $1.1$ to $3.0$ and $N=100$  for $t$ varying $0.1$ by $0.1$ from $3.1$ to $11.0$. The number of Monte Carlo trials $M$ is set to $300000$. In the quantization phase we obtain the optimal grid by carrying out   $80$ Lloyd's I procedures.
\vskip 0.3cm
\noindent  $\rhd$ {\it The Black-Scholes model.} We consider the following model for the firm value's and the observed process' dynamics:
\begin{equation}
\label{DinBS}
\left \{
\begin{array}{rclr}
dV_t  &=&  V_t ( r dt  +  \sigma  dW_t), & V_0 = v_0,\\
d S_t & = & S_t (r dt + \sigma d W_t + \delta d \bar W_t), & S_0=v_0,
\end{array}
\right.
\end{equation}
with $r=0.03$, $\sigma = 0.05$, $v_0=86.3$. The barrier $a$ is fixed to $76$ and $\delta=0.1$. We have, in particular,
\beq   \label{EqReturnS}
\frac{dS_t}{S_t} = \frac{dV_t}{V_t} + \delta d \bar{W}_t,
\eeq
meaning that the return on $S$ is the return on $V$ affected by a noise.
\begin{Rem}{\rm
a) Since $V$ is not traded in the market, the return on $V$ is not necessarily equal to the interest rate $r$. We have set the return on $V$ to be equal to $r$ in order to have a good economic interpretation of the model.\\
b) Note that in this particular case all the hypotheses of Theorem \ref{ThmConvergence} are satisfied.
}
\end{Rem}
Numerical results are presented in Figures \ref{figureSBS} and \ref{figureVBS}. Figure \ref{figureSBS} is relative to the partial information case, so that three trajectories of the observed process $S$ and the corresponding credit spreads are depicted. Figure \ref{figureVBS}, on the contrary, concerns the full information case, namely we suppose that we can observe different trajectories of $V$, that are drawn on the left, and we compute the corresponding credit spreads, that are shown on the right.

Remark that in this setting we deduce from  (\ref{DinBS}) that
$$  S_t = V_t e^{-\frac{1}{2} \delta^2 t + \delta \bar{W}_t}. $$
The correlation coefficient is then given for every $t$ by
$$
\rho(t):= \sqrt{\frac{e^{\sigma^2 t }-1}{e^{(\sigma^2 +\delta^2)t}-1}},
$$
meaning that the firm value $V$ is positively correlated to the observation process $S$. Observe that when $\sigma<\delta$, $\rho(t)$  is a strictly decreasing function and goes to $0$ as $t$ goes to infinity. This tell us that the a posteriori information on $V$ given  $S$ decreases as the maturity $t$ increases. This is what we observe in the spreads curves from Figures \ref{figureSBS} and \ref{figureVBS}, since for large maturities the spreads values almost coincide  for analogous  trajectories.

\begin{figure}[htpb]
  \includegraphics[width=8.3cm,height=6.0cm]{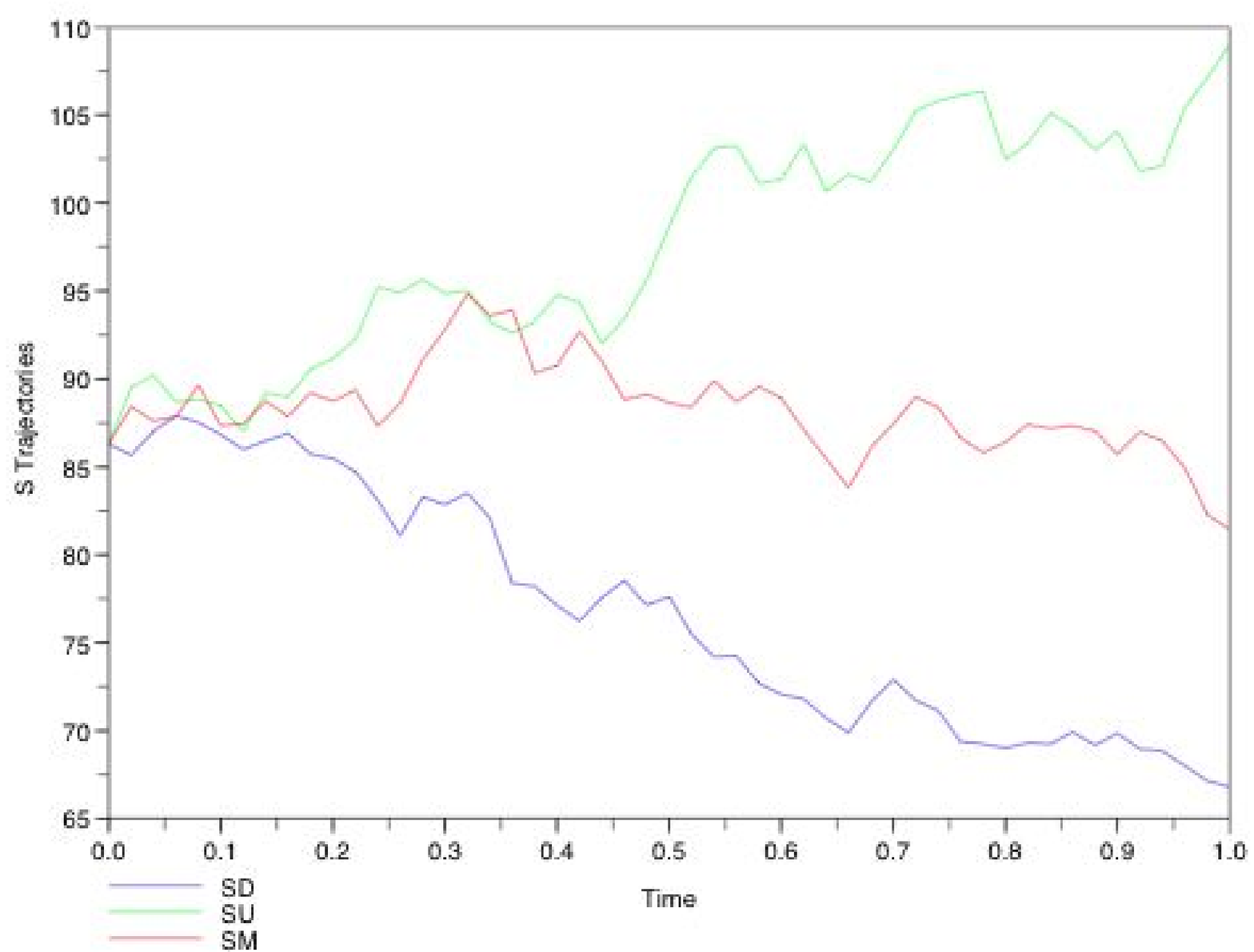}
  \hfill \includegraphics[width=8.3cm,height=6.0cm]{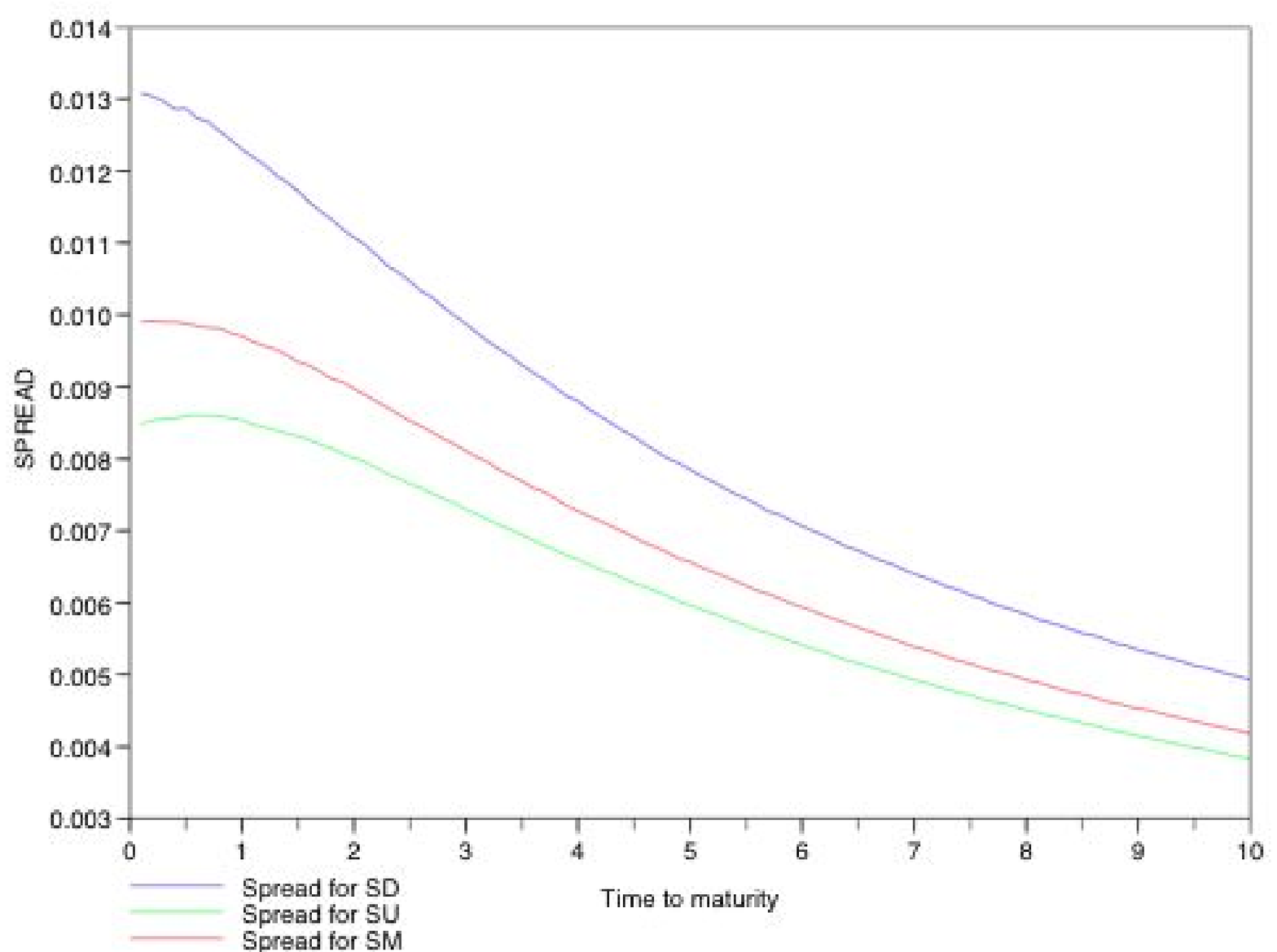}
  \caption{\small{Three trajectories of the observed process $S$ (on the left) and the corresponding spreads  (on the right).}}
\label{figureSBS}
  \end{figure}

  \begin{figure}[htpb]
  \includegraphics[width=8.3cm,height=6.0cm]{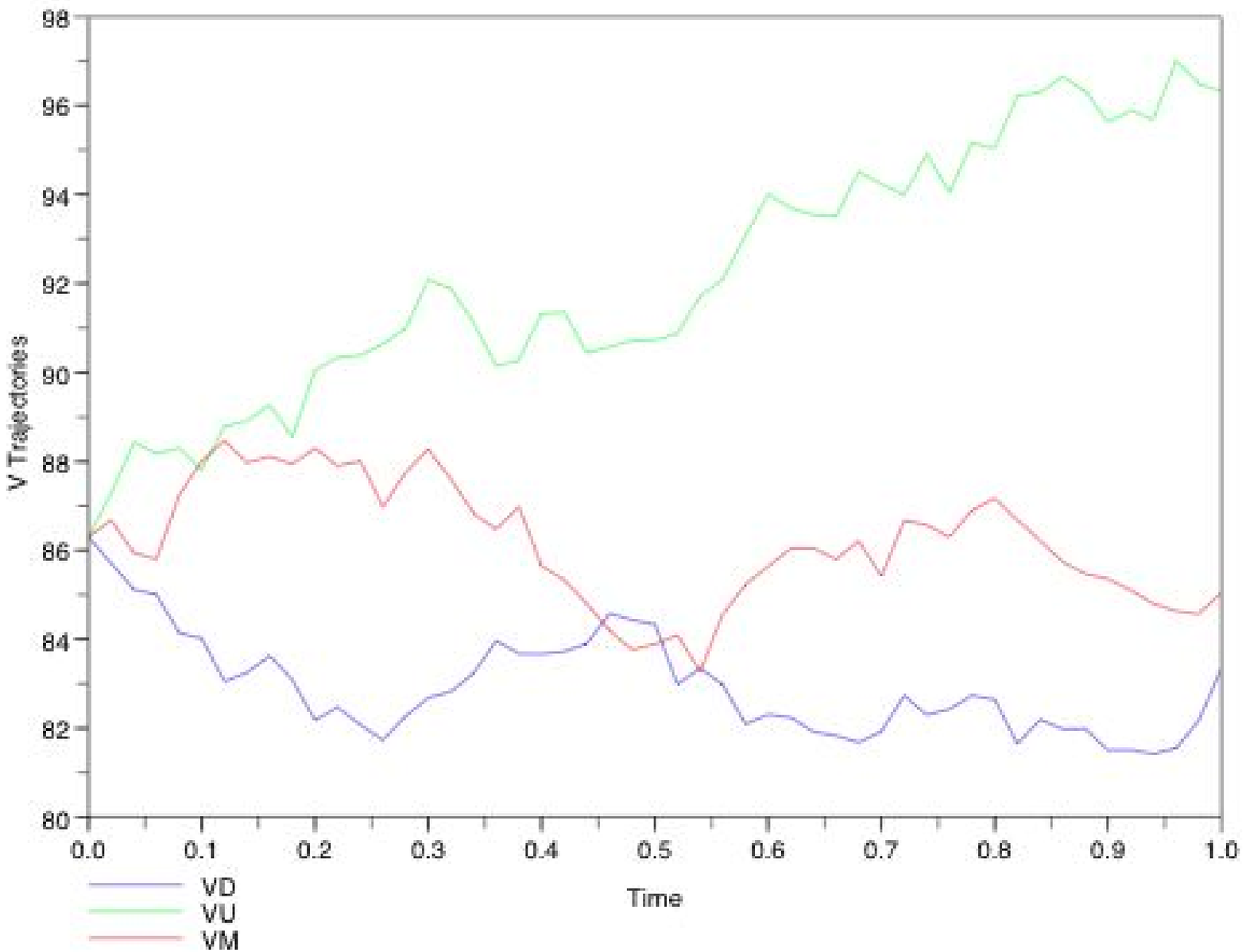}
  \hfill \includegraphics[width=8.3cm,height=6.0cm]{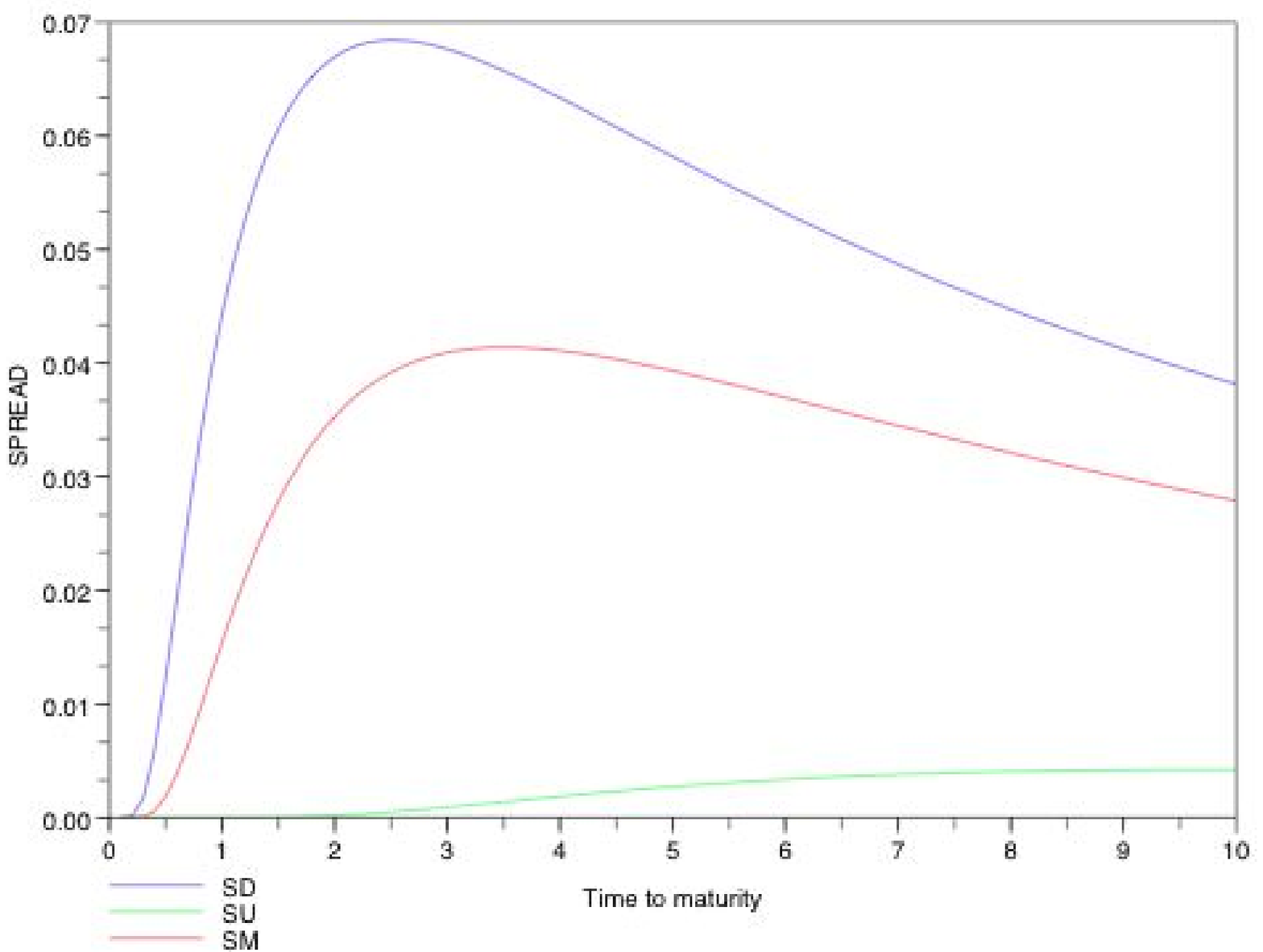}
  \caption{\small{Three trajectories of the value process in the full information case (on the left) and the corresponding spreads (on the right).}}
\label{figureVBS}
  \end{figure}

\vskip 0.3cm
First of all, we notice that the short term spreads under partial information, being the default time totally inaccessible, do not vanish, as it is the case in the full information model. Moreover, since $V_t$ and $S_t$ are positively correlated, it is expected that  the more the trajectory of $S$ behaves ``badly'', the higher the short term spreads are, as shown in Figure  \ref{figureSBS}.

In the full information setting the short term spreads are always equal to zero, but in ``bad'' situations (for example in the case of the blue trajectory on the left-hand side of Figure \ref{figureVBS}) the medium term spreads can be higher than in the partial information model.

The credit spreads under partial information can, then, be considered as good indicators of the transparency of a firm as viewed by bond-market participants.

\vskip 0.3cm
\noindent  $\rhd$ {\it The CEV model.} We suppose that the firm value's and the observed process' dynamics are given by
\begin{equation}
\left \{
\begin{array}{rclr}
d V_t &=& V_t ( \mu dt + \gamma V_t^{\beta} dW_t), & V_0 = v_0,\\
d S_t & = & S_t (r dt + \sigma d W_t + \delta d \bar W_t), &S_0=v_0,
\end{array}
\right.
\end{equation}
where $\mu=r=0.03$, $\gamma=744.7$ (it is chosen so that the initial volatility equals $0.10$), $\beta=-2$ (notice that in this case one of the characteristics of the model is that the \emph{leverage} effect holds: a firm value process increase implies a decrease in the variance of the price process return), $\sigma=0.05, \delta=0.1, v_0=86.3$.
\begin{Rem}{\rm
Since in this situation the firm value process reaches zero with a positive probability (see, e.g., Jeanblanc and Yor, 2009), we have to work with the stopped process ${(V_{\tau \wedge t})}_t$.
}
\end{Rem}
Numerical results are presented in Figure \ref{figureCEV}, where three trajectories of the observed process $S$ and the corresponding spreads are depicted.
We first notice that the spreads in this example are higher than the ones in the previous example. This is due to the fact that in this case the observed process $S$ is more volatile, as it can be seen from Figure \ref{figureCEV}, compared to Figure \ref{figureSBS}.

Secondly, we remark, as in the previous example, that the more the trajectory of S behaves ``badly'', the higher the short
term spreads are, as shown in Figure \ref{figureCEV}.

Moreover, notice that the spread curves corresponding to the two worst $S$ trajectories seem to cross, however a zoom in the graph shows that it is not the case and that the blue curve is always above the red one. This can be explained by noticing that the model we use keeps the memory of all the observed path and that the blue trajectory of $S$ is globally worse than the red one.
\begin{center}
\begin{figure}[htpb]
\includegraphics[width=8.3cm,height=6.0cm]{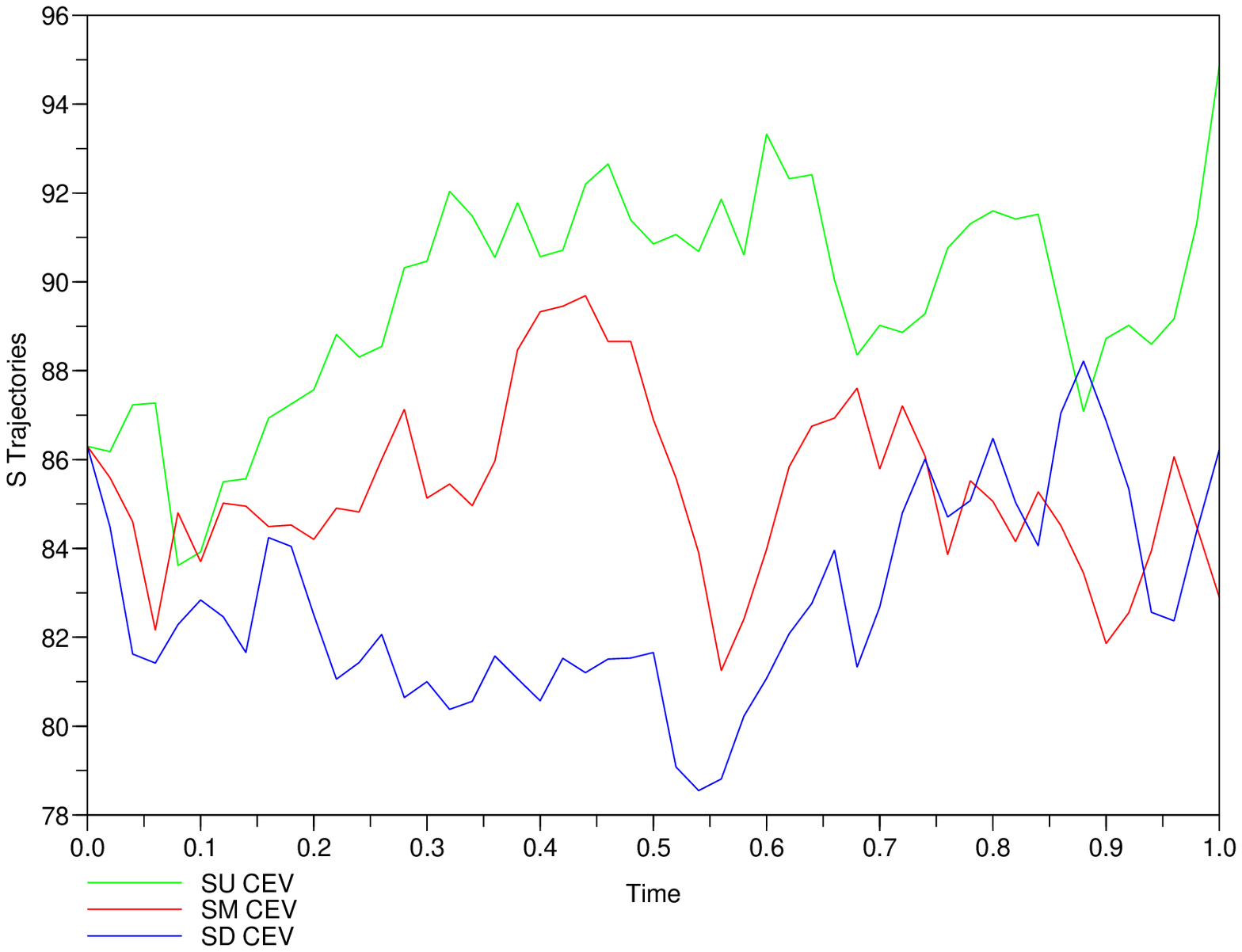}
\hfill \includegraphics[width=8.3cm,height=6.0cm]{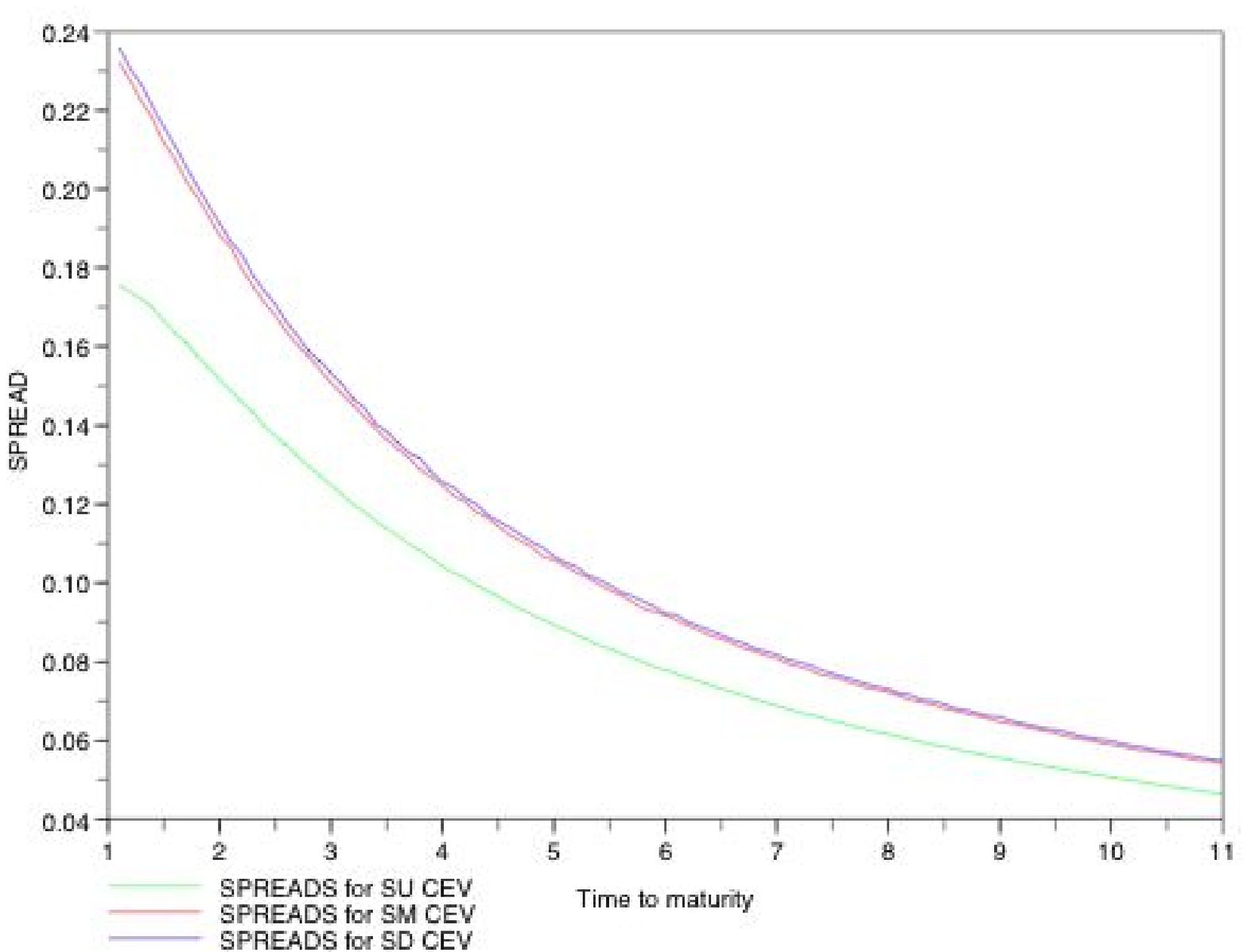}
\caption{\small{Three trajectories of the observed process $S$ in the CEV model (on the left) and the corresponding spreads (on the right).}}
\label{figureCEV}
\end{figure}
\end{center}


We end this section by pointing out some general facts concerning the me\-tho\-do\-lo\-gy we have used.
\begin{Rem} {\rm
a) The most  important fact from the numerical point of view is that, as soon as the process $V$ is quantized over $[0,s]$, the survival probability   $\mathbb{Q}(\inf_{s \le u \leq t} V_u >a \vert \mathcal F_s^{S})$ is estimated for every maturity  $t>s$ without modifying the optimal quantization grid of $V$.\\
\noindent
b)  Notice also that this method works as soon as the process $V$ can be quantized.\\
\noindent
c) Dealing with the computation of credit spread we have to work under the probability $\mathbb Q$. In the case where $S$ is traded in the market the drift under the measure $\mathbb Q$ is equal to $r$ and we do not need to know the function $\psi$ in order to estimate the survival probability. When the observed process is not traded in the market, on the contrary, and if we are not interested in the computation of the credit spreads, the survival probability can be computed under the historical probability $\mathbb P$, so that the function $\psi$ is exploited.
}
\end{Rem}

\vskip 1cm
\section*{References}
\bi
\item[] \hspace{- 1.05 cm} Bielecki, T.R., \& Rutkowski, M. (2004). {\it Credit Risk: Modeling, Valuation and Hedging}. Berlin: Springer (Springer Finance). \vspace{- 0,2 cm}
\item[] \hspace{- 1.05 cm} Borodin, A.N., \& Salminen, P. (2002). {\it Handbook of Brownian motion - Facts and Formulae}. 2nd ed. Basel: Birkh\"{a}user. \vspace{- 0,2 cm}
\item[] \hspace{- 1.05 cm} Coculescu, D., Geman, H., \& Jeanblanc, M. (2008). Valuation of Default Sensitive Claims under Imperfect Information. {\it Finance and Stochastics}, 12(2), pp. 195-218. \vspace{- 0,2 cm}
\item[] \hspace{- 1.05 cm} Duffie, D., \& Lando, D. (2001). Term Structures of Credit Spreads with Incomplete Accounting Information. {\it Econometrica}, 69(3), pp. 633-664. \vspace{- 0,2 cm}
\item[] \hspace{- 1.05 cm} Frey, R., \& Runggaldier, W.J. (2009). Nonlinear filtering in models for interest rate and credit risk. To appear in {\it Handbook on nonlinear filtering}. Oxford University Press. \vspace{- 0,2 cm}
\item[] \hspace{- 1.05 cm} Gersho, A., \& Gray, R. (1992). {\it Vector Quantization and Signal Compression}. Boston: Kluwer Academic Press. \vspace{- 0,2 cm}
\item[] \hspace{- 1.05 cm} Gobet, E. (1998). {\it Sch\'{e}mas d'Euler pour diffusion tu\'{e}e. Application aux options barri\`{e}re}. PhD thesis. Paris: Universit\'{e} Paris VII. \vspace{- 0,2 cm}
\item[] \hspace{- 1.05 cm} Gobet, E. (2000). Weak approximation of killed diffusion using Euler schemes. {\it Stochastic Processes and their Applications}, 87, pp. 167-197. \vspace{- 0,2 cm}
\item[] \hspace{- 1.05 cm} Graf, S., \& Luschgy, H. (2000). {\it Foundations of Quantization for Probability Distributions}. Volume 1730 of Lecture Notes in Mathematics. Springer - Verlag. \vspace{- 0,7 cm}
\item[] \hspace{- 1.05 cm} Jarrow, R.A., \& Protter, P. (2004). Structural versus reduced form models: a new information based perspective. {\it Journal of investment management}. 2(2), pp. 1-10. \vspace{- 0,2 cm}
\item[] \hspace{- 1.05 cm} Jeanblanc, M., Yor, M., \& Chesney, M. (2009). {\it Mathematical Methods for Financial Markets}. Springer (to appear). \vspace{- 0,2 cm}
\item[] \hspace{- 1.05 cm} Kahal\'{e}, N. (2007). Analytic crossing probabilities for certain barriers by Brownian motion. Preprint. \vspace{- 0,2 cm}
\item[] \hspace{- 1.05 cm} Kusuoka, S. (1999). A remark on default risk models. {\it Advances in Mathematical Economics}, 1, pp. 69-82. \vspace{- 0,2 cm}
\item[] \hspace{- 1.05 cm} Linestky, V. (2004). The spectral decomposition of the option value. {\it International Journal of Theoretical and Applied Finance}, 7(3), pp. 337-384. \vspace{- 0,2 cm}
\item[] \hspace{- 1.05 cm} Nakagawa, H. (2001). A Filtering Model on Default Risk. {\it J. Math. Sci. Univ. Tokyo}, 8(1), pp. 107-142. \vspace{- 0,2 cm}
\item[] \hspace{- 1.05 cm} Pag\`{e}s, G. (1997). A space quantization method for numerical integration. {\it Journal of Computational and Applied Mathematics}, 89(1), pp. 1-38. \vspace{- 0,2 cm}
\item[] \hspace{- 1.05 cm} Pag\`{e}s, G. (2008). {\it Introduction to numerical probability for finance}. Lecture notes for the Master 2 ``Probabilit\'{e}s et applications'', University Paris VI. \vspace{- 0,2 cm}
\item[] \hspace{- 1.05 cm} Pag\`{e}s, G., \& Pham, H. (2005). Optimal quantization methods for nonlinear filtering with discrete time observations. {\it Bernoulli}, 11(5), pp. 893-932. \vspace{- 0,2 cm}
\item[] \hspace{- 1.05 cm} Pag\`{e}s, G., \& Printems, J. (2003). Optimal quadratic quantization for numerics: the Gaussian case. {\it Monte Carlo Methods and Appl.}, 9(2), pp. 135-165. \vspace{- 0,2 cm}
\item[] \hspace{- 1.05 cm} Pham, H., Runggaldier, W.J., \& Sellami, A. (2005). Approximation by quantization of the filter process and applications to optimal stopping problems under partial observation. {\it Monte Carlo Methods and Applications}, 11(1), pp. 57-81. \vspace{- 0,2 cm}
\item[] \hspace{- 1.05 cm} Revuz, D., \& Yor, M. (1999). {\it Continuous martingales and Brownian motion}. 3rd ed., volume 293 of Grundlehren der mathematischen Wissenschaften [Fundamental Principles of Mathematical Sciences]. Berlin: Springer-Verlag. \vspace{- 0,2 cm}

\ei
\end{document}